\begin{document}
\font\frak=eufm10 scaled\magstep1
\font\fak=eufm10 scaled\magstep2
\font\fk=eufm10 scaled\magstep3
\font\black=msbm10 scaled\magstep1
\font\bigblack=msbm10 scaled\magstep 2
\font\bbigblack=msbm10 scaled\magstep3
\font\scriptfrak=eufm10
\font\tenfrak=eufm10
\font\tenblack=msbm10


\def\biggoth #1{\hbox{{\fak #1}}}
\def\bbiggoth #1{\hbox{{\fk #1}}}\def\sp #1{{{\cal #1}}}
\def\goth #1{\hbox{{\frak #1}}}
\def\scriptgoth #1{\hbox{{\scriptfrak #1}}}
\def\smallgoth #1{\hbox{{\tenfrak #1}}}
\def\smallfield #1{\hbox{{\tenblack #1}}}
\def\field #1{\hbox{{\black #1}}}
\def\bigfield #1{\hbox{{\bigblack #1}}}
\def\bbigfield #1{\hbox{{\bbigblack #1}}}
\def\Bbb #1{\hbox{{\black #1}}}
\def\v #1{\vert #1\vert}             
\def\ord#1{\vert #1\vert}
\def\m #1 #2{(-1)^{{\v #1} {\v #2}}} 
\def\lie #1{{\sp L_{\!#1}}}               
\def\pd#1#2{\frac{\partial#1}{\partial#2}}
\def\pois#1#2{\{#1,#2\}}
\def\set#1{\{\,#1\,\}}             
\def\<#1>{\langle#1\rangle}        
\def\>#1{{\bf #1}}                
\def\f(#1,#2){\frac{#1}{#2}}
\def\cociente #1#2{\frac{#1}{#2}}
\def\braket#1#2{\langle#1\mathbin\vert#2\rangle} 
\def\brakt#1#2{\langle#1\mathbin,#2\rangle}           
\def\dd#1{\frac{\partial}{\partial#1}}
\def\bra #1{{\langle #1 |}}
\def\ket #1{{| #1 \rangle }}
\def\ddt#1{\frac{d #1}{dt}}
\def\dt2#1{\frac{d^2 #1}{dt^2}}
\def\matriz#1#2{\left( \begin{array}{#1} #2 \end{array}\right) }
\def\Eq#1{{\begin{equation*} #1 \end{equation*}}}

\def\bw{{\bigwedge}}      
\def\hut{{\scriptstyle \land}}            
\def\dg{{\goth g^*}}                                                                                                            
\def\Cdg{{C^\infty (\goth g^*)}}
\def\poi{\{\:,\}}                           
\def\qw{\hat\omega}                
\def\FL{{\sp F}L}                 
\def\hFL{\widehat{{\sp F}L}}      
\def\XHMw{\goth X_H(M,\omega)}
\def\XLHMw{\goth X_{LH}(M,\omega)}
\def\ea{\varepsilon_a}
\def\ep{\varepsilon}
\def\mitad{\frac{1}{2}}
\def\x{\times}
\def\cinf{C^\infty}
\def\forms{\bigwedge}                 
\def\onda{\tilde}
\def\orb{{\sp O}}

\def\a{\alpha}
\def\d{\delta}
\def\g{{\gamma }}                  
\def\G{{\Gamma}}	
\def\La{\Lambda}                   
\def\la{\lambda}                   
\def\w{\omega}                     
\def\W{{\Omega}}                   
\def\ltimes{\bowtie}

\def\roc{{\tilde{\cal R}}}                       
\def\cl{{\cal L}}                               
\def\V{{\sp V}}                                 
\def\F{{\sp F}}
\def\cv{{{\goth X}}}                    
\def\LG{\goth g}
\def\LH{\goth h}
\def\X{{{\goth X}}}                     
\def\R{{\hbox{{\field R}}}}             
\def\big R{{\hbox{{\bigfield R}}}}
\def\bbig R{{\hbox{{\bbigfield R}}}}
\def\C{{\hbox{{\field C}}}}         
\def\Z{{\hbox{{\field Z}}}}             
\def\N{{\hbox{{\field N}}}}         

\def\ima{\hbox{{\rm Im}}}                               
\def\dim{\hbox{{\rm dim}}}        
\def\End{\hbox{{\rm End}}}
\def\Tr{\hbox{{\rm Tr}}}
\def\tr{{\hbox{\rm\small{Tr}}}}                
\def\lin{{\hbox{Lin}}}
\def\vol{{\hbox{vol}}}
\def\Hom{{\hbox{Hom}}}
\def\rank{{\hbox{rank}}}
\def\Ad{{\hbox{Ad}}}
\def\ad{{\hbox{ad}}}
\def\CoAd{{\hbox{CoAd}}}
\def\coad{{\hbox{coad}}}
\def\Rea{\hbox{Re}}                     
\def\id{{\hbox{id}}}                    
\def\Id{{\hbox{Id}}}
\def\Int{{\hbox{Int}}}
\def\Ext{{\hbox{Ext}}}
\def\Aut{{\hbox{Aut}}}
\def\Card{{\hbox{Card}}}
\def\SODE{{\small{SODE }}}
\newcommand{\bea}{\begin{eqnarray}}
\newcommand{\eea}{\end{eqnarray}}

\def\R{\mathbb{R}}
\def\ba{\begin{eqnarray}}
\def\ea{\end{eqnarray}}
\def\be{\begin{equation*}}
\def\ee{\end{equation*}}
\def\Eq#1{{\begin{equation*} #1 \end{equation*}}}
\def\R{\Bbb R}
\def\C{\Bbb C}
\def\Z{\Bbb Z}
\def\a{\alpha}                  
\def\b{\beta}                   
\def\g{\gamma}                  
\def\d{\delta}                  
\def\bra#1{\langle#1|}
\def\ket#1{|#1\rangle}
\def\goth #1{\hbox{{\frak #1}}}
\def\<#1>{\langle#1\rangle}
\def\cotg{\mathop{\rm cotg}\nolimits}
\def\Map{\mathop{\rm Map}\nolimits}
\def\wt{\widetilde}
\def\const{\hbox{const}}
\def\grad{\mathop{\rm grad}\nolimits}
\def\Div{\mathop{\rm div}\nolimits}
\def\braket#1#2{\langle#1|#2\rangle}
\def\Erf{\mathop{\rm Erf}\nolimits}
\def\matriz#1#2{\left( \begin{array}{#1} #2 \end{array}\right) }
\def\Eq#1{{\begin{equation*} #1 \end{equation*}}}
\def\deter#1#2{\left| \begin{array}{#1} #2 \end{array}\right| }
\def\pd#1#2{\frac{\partial#1}{\partial#2}}
\def\til{\tilde}

\def\la#1{\lambda_{#1}}
\def\teet#1#2{\theta [\eta _{#1}] (#2)}
\def\tede#1{\theta [\delta](#1)}
\def\N{{\frak N}}
\def\GR{{\cal G}}
\def\Wei{\wp}

\def\frac#1#2{{#1\over#2}} \def\pd#1#2{\frac{\partial#1}{\partial#2}}
\def\matrdos#1#2#3#4{\left(\begin{matrix}#1 & #2 \cr          
                                 #3 & #4 \cr\end{matrix}\right)}

\newtheorem{teor}{Teorema}[section]
\newtheorem{cor}{Corolario}[section]
\newtheorem{prop}{Proposici\'on}[section]
\newtheorem{note}[prop]{Note}
\newtheorem{definicion}{Definici\'on}[section]
\newtheorem{lema}{Lema}[section]
\theoremstyle{plain}
\newtheorem{theorem}{Theorem}
\newtheorem{corollary}{Corollary}
\newtheorem{proposition}{Proposition}
\newtheorem{definition}{Definition}
\newtheorem{lemma}{Lemma}

\def\Eq#1{{\begin{equation*} #1 \end{equation*}}}
\def\R{\Bbb R}
\def\C{\Bbb C}
\def\Z{\Bbb Z}
\def\mp#1{\marginpar{#1}}

\def\la#1{\lambda_{#1}}
\def\teet#1#2{\theta [\eta _{#1}] (#2)}
\def\tede#1{\theta [\delta](#1)}
\def\N{{\frak N}}
\def\Wei{\wp}
\def\Hil{{\cal H}}

\font\frak=eufm10 scaled\magstep1

\def\bra#1{\langle#1|}
\def\ket#1{|#1\rangle}
\def\goth #1{\hbox{{\frak #1}}}
\def\<#1>{\langle#1\rangle}
\def\cotg{\mathop{\rm cotg}\nolimits}
\def\cotanh{\mathop{\rm cotanh}\nolimits}
\def\arctanh{\mathop{\rm arctanh}\nolimits}
\def\wt{\widetilde}
\def\const{\hbox{const}}
\def\grad{\mathop{\rm grad}\nolimits}
\def\Div{\mathop{\rm div}\nolimits}
\def\braket#1#2{\langle#1|#2\rangle}
\def\Erf{\mathop{\rm Erf}\nolimits}

\centerline{\Large \bf Integrability of Lie systems}
\vskip 0.75cm

\centerline{\Large \bf through Riccati equations}
\vskip 0.75cm

\centerline{ Jos\'e F. Cari\~nena$^\dagger$ and Javier de Lucas$^{\dagger\ddagger}$}
\vskip 0.5cm

\centerline{$^{\dagger}$Departamento de  F\'{\i}sica Te\'orica, Universidad de Zaragoza,}
\medskip
\centerline{50009 Zaragoza, Spain.}
\medskip
\centerline{$^{\ddagger}$Institute of Mathematics of the Polish Academy of Science,}
\medskip
\centerline{P.O. Box 00-956, Warszawa, Poland.}
\medskip

\begin{abstract}
Integrability conditions for Lie systems are related to reduction or transformation processes.
We here analyse a geometric method to construct
integrability conditions for Riccati equations following these approaches. Our procedure provides us with
 a unified geometrical viewpoint that allows us to analyse some previous
works on the topic and explain new properties. Moreover, this new approach can be straightforwardly
 generalised to describe integrability conditions for any Lie
 system. Finally
 we show the usefulness of our treatment in order to study the problem of the
 linearisability of
 Riccati equations.
\end{abstract}
\vskip 2.0cm
{{\bf Keywords: }Lie-Scheffers systems; Riccati equations; integrability conditions.}\\
\noindent
{{\bf 2000 Mathematics Subject Classification:} 17B80, 34A05, 34A26, 34A34}

\section{Introduction}
\indent

The Riccati equation
\begin{equation}\label{ricceq}
\frac{dy}{dt}=b_0(t)+b_1(t)y+b_2(t)y^2,
\end{equation}
is the simplest non-linear differential equation \cite{CRL07,CarRamGra} and it appears
 in many different fields of Mathematics and Physics \cite{CL09,CLR10,CMN,Ch09,MH05,RM08,SHC07,DS08,PW}. It is essentially the only first-order ordinary differential
equation on the real line admitting a
non-linear superposition principle \cite{LS,PW} and in spite of its apparent simplicity,
its general solution cannot be described by means
of quadratures except in some very particular cases \cite{AS64,CarRam,Ib08,Ib09II,Kamke,Ko06,Zh98,Zh99,Mu60,Na99,Na00,Pr80,Ra61,Ra62,RU68,RDM05,Stre}.

In this paper we review the geometric approach to Riccati equations
 according to the results of the works \cite{CRL07,CarRam} with the aim of proving that
integrability conditions of Riccati equations can be understood in a very
general way from the point of view of the theory of Lie systems
\cite{CR02,Gu93,Ib96,Ib99,Ib89,LS, Ve93,PW}. Furthermore, we
recover various known results as particular cases of our approach and the method here derived can be applied to any other Lie system, e.g. \cite{CL09,CLR10}.

Each Lie system is associated with a Lie algebra of vector fields, the so-called Vessiot--Guldberg Lie algebra \cite{CLR10,GGL08,Gu93,Ib96,Ve93}. This Lie algebra can be used to classify those Lie systems that can be integrated by quadratures \cite{CarRamGra}. For instance, it is a known fact that Lie systems related to solvables Vessiot--Guldberg Lie algebras, e.g. affine homogeneous systems or linear homogeneous systems, can be integrated by quadratures \cite{CarRamGra,Ib09II,Ib09}. Nevertheless, the general solution of Lie systems related to non-solvables Lie algebras, e.g. Riccati equations, cannot be completely determined and it frequently relies on the knowledge of certain special functions \cite{Ib09II}, the solution of other equations \cite{Na99,Na00,DS08}, etc. 

The method developed here allows us to determine integrable cases of Lie systems related to non-solvables Vessiot--Gulberg Lie algebras. Such a procedure is detailed for Riccati equations, which are associated with a non-solvable Vessiot--Guldberg Lie algebra isomorphic to $\mathfrak{sl}(2,\mathbb{R})$, but it can also be applied to other Lie systems related to Vessiot--Guldberg Lie algebras isomorphic to this one, for example, the Lie systems connected to Milne--Pinney equations \cite{CL09Milne}, Ermakov systems \cite{CLR08e}, harmonic oscillators \cite{CLR10}, etc. 

We finally analyse the linearisation of Riccati equation by means of our new approach and recover a characterisation previously proved by Ibragimov \cite{Ib08}. Furthermore, we detail a, as far as we know, new result about the properties of linearisation of Riccati equations. 

The paper is organised as follows. For the sake of completeness,
we report some known facts on the integrability of Riccati equations in Sec. 2 and we review the geometric interpretation of
the general Riccati equation as a $t$-dependent vector field on
the one-point compactification of the real line in Sec. 3. As a consequence of the latter, Riccati equations can be studied through equations on $SL(2,\mathbb{R})$.
Sec. 4 is devoted to reporting some known results on the action of the group of curves in $SL(2,\mathbb{R})$
on the set of Riccati equations and how this action can be seen in terms of
transformations
of the corresponding equations on $SL(2,\mathbb{R})$, see \cite{CarRam05b}. In Sec. 5 we build up a Lie system describing the transformation process of Riccati
equations through the action of curves of $SL(2,\mathbb{R})$ and then, in Sec. 6, we analyse
the general characteristics of our approach into integrability conditions and
how the transformation processes described by the previous Lie system can be used to give a unified approach to the results of \cite{CRL07,CarRamGra}. In Sec. 7 we develop
a particular case of the procedures of Sec. 6 in order to recover some results found in the literature \cite{AS64,Ko06,Ra61,Ra62,RU68,RDM05}. Sec. 8 is devoted to analyzing the theory of integrability through reduction from our new viewpoint. Finally, in
Sec. 9 we describe how our Lie system for studying
integrability conditions enables us  to explain when certain linear fractional transformations allow us to linearise Riccati equations. As a particular instance we obtain a result given in \cite{Ib08,RDM05}.

\section{Integrability of Riccati equations}\label{IntRicEqu}
\indent

In order to provide a first insight into the study of integrability conditions for Riccati equations, and for any Lie system in general, we review in this Section some known results about the integrability of Riccati equations.

As a first particular example, Riccati equations (\ref{ricceq}) are integrable by quadratures when
$b_2=0$. Indeed, in such a case these equations reduce to an inhomogeneous linear equation
and two quadratures allow us to find the
general solution.

Additionally, under the change of variable
$w=-1/y$  the  Riccati  equation
(\ref{ricceq}) reads
$$
\frac{dw}{dt}=b_0(t)\, w^2-b_1(t)\,w+b_2(t)
$$
and if we suppose  $b_0=0$ in Eq. (\ref{ricceq}),
 then the mentioned change of variable transforms the given equation into an integrable
linear one.

Another very well-known property on integrability of Riccati equations is
that given a particular solution $y_1(t)$ of Eq. (\ref{ricceq}),
then the change of variable $y=y_1(t)+z$ leads to a new Riccati  equation for which the coefficient of the
 term independent of $z$
 is zero, i.e.
 \begin{equation}
\frac {dz}{dt}=[2\, b_2(t)\, y_1(t)+ b_1(t)] z+ b_2(t)\,z^2, \label{Bereq}
\end{equation}
and, as we pointed out before, it can be reduced to an inhomogeneous  linear equation with the change
$z=-1/u$. Therefore, given one particular solution, the general
solution can be found by means of two quadratures.

 If not only one but two particular solutions,
$y_1(t)$ and $y_2(t)$, of Eq. (\ref{ricceq}) are known, the general solution can be found by means of only one quadrature. In fact, the change of variable
$z=(y-y_1(t))/(y-y_2(t))$ transforms the original equation into a homogeneous first-order linear differential equation in the new variable  $z$
and therefore the general solution can immediately be found.

 Finally, giving three particular solutions, $y_1(1),y_2(t),y_3(t)$, the general solution can be
 written, without making use of any quadrature, in the following way
$$
y(t)=\frac{y_1(t)(y_3(t)-y_2(t))-ky_2(t)(y_1(t)-y_3(t))}{(y_3(t)-y_2(t))-k(y_1(t)-y_3(t))}.
$$
This is a non-linear superposition rule studied  in
\cite{CMN} from a group theoretical perspective.

The simplest case of Eq. (\ref{ricceq}), when it is an autonomous equation ($b_0$, $b_1$ and $b_2$ constants), has been fully studied (see
e.g. \cite{CarRamdos} and references therein) and it is integrable by
quadratures. This result can be considered as a consequence of the existence of a constant
(maybe complex) solution enabling us to reduce the Riccati equation into an
inhomogeneous linear one. Moreover, the separable Riccati equations of the form
\begin{equation*}
\frac{dy}{dt}=\varphi(t)(c_0+c_1\,y+c_2\,y^2),
\end{equation*}
with $\varphi(t)$ a non-vanishing function on a certain open interval $I\subset \mathbb{R}$ and $c_0$, $c_1$, $c_2$ real numbers, are integrable because a new time function $\tau=\tau(t)$
such that $d\tau/dt=\varphi(t)$ reduces the above equation into an autonomous one. Furthermore, the above Riccati equations are also integrable as they accept, in similarity to the autonomous case, a constant (maybe complex) solution.

\section{Geometric approach to Riccati equations}
\noindent

Let us report in this Section some known results about the geometrical approach to the Riccati equation \cite{CRL07}. Such a point of view is used in next Sections to investigate integrability conditions for these equations and, in general, for any Lie system.

 From the geometric viewpoint, the Riccati equation
(\ref{ricceq})
 can be considered
as a differential equation determining the integral curves for  the
$t$-dependent  vector field \cite{Car96}
\begin{equation}
X(t,y)=\left[b_0(t)+b_1(t)y+b_2(t)y^2\right]\frac{\partial}{\partial y}\ .\label{vfRic}
\end{equation}

This $t$-dependent vector field is a linear combination with $t$-dependent coefficients $b_0(t)$, $b_1(t)$ and $b_2(t)$
of the three  vector fields
\begin{equation}
L_0 =\frac{\partial}{\partial y}\,,        \quad
L_1 =y\,\frac{\partial}{\partial y}\, ,    \quad
L_2 = y^2\,\frac{\partial}{\partial y}\,,  \label{sl2gen}
\end{equation}
with defining relations
\begin{equation}\label{conmutL}
[L_0,L_1] = L_0\,,      \quad
[L_0,L_2] = 2L_1\,,     \quad
[L_1,L_2] = L_2 \,,
\end{equation}
and therefore spanning a three-dimensional Lie algebra of vector fields $V$. Consequently, Riccati equations are Lie systems \cite{LS} and the Lie algebra $V$, the so-called Vessiot--Guldberg Lie algebra \cite{Gu93,Ve93}, is isomorphic to  $\mathfrak{sl}(2,\mathbb{R})$ being here considered as made up by traceless $2\times 2$ matrices. A particular basis for $\mathfrak{sl}(2,\mathbb{R})$ is given by
\begin{equation}
M_0=\left(\begin{matrix}
0&-1\\0&0     
\end{matrix}\right)\,,
M_1=\frac{1}{2}\left(\begin{matrix}
-1&0\\0&1               
\end{matrix}\right)\,,
M_2=\left(\begin{matrix}
 0&0
\\1&0    
    \end{matrix}\right)\ .
\label{base_matrices}
\end{equation}
Moreover, it can be checked that the linear map $\rho:\mathfrak{sl}(2,\mathbb{R})\rightarrow V$ obeying $\rho(M_j)=L_j$, with $j=0,1,2$, is
 a Lie algebra isomorphism.

Note that $L_2$ is not a complete vector field on $\mathbb{R}$. However we can do the one-point compactification $\overline{\mathbb{R}}=\mathbb{R}\cup \{\infty\}$ of $\mathbb{R}$ and then $L_0$, $L_1$ and  $L_2$ are complete vector fields on $\overline{\mathbb{R}}$. Consequently, these vector fields are fundamental vector fields
corresponding to the action  $\Phi:(A,y)\in SL(2,{\mathbb{R}})\times \overline{\mathbb{R}}\mapsto \Phi(A,y)\in\overline{\mathbb{R}}$ given by
\begin{equation}\label{Action}
\Phi(A,y)=\left\{
\begin{aligned}
&{\frac{\alpha\, y+\beta}{\gamma\, y+\delta}}\quad &y&\neq-{\frac{\delta}{\gamma}},\,\,\,y\neq\infty,\\
&\frac{\alpha}{\gamma} \quad  &y&=\infty,\\
&\infty \quad&y&=-\frac{\delta}{\gamma}, \\
\end{aligned}
\right.\quad {\rm with}\quad A=\left(\begin{array}{cc}\alpha& \beta\\\gamma&\delta\end{array}\right)\in SL(2,\mathbb{R}).
\end{equation}

Denote by $X^{\tt R}_j$ and  $ X^{\tt L}_j$, $j=1,2,3$, the right- and left-invariant vector fields on $SL(2,{\mathbb{R}})$ such that $X^{\tt
  R}_j(I)=X^{\tt
  L}_j(I)=M_j$. Moreover, these vector fields satisfy that $X^{\tt R}_j(A)=M_j\cdot A$ and $X^{\tt L}_j(A)=A\cdot M_j$, with  ``$\cdot$'' the usual matrix multiplication.

A remarkable property  is that if $A(t)$ is
 the integral curve for
the $t$-dependent
 vector field
$$X(t)=-\sum_{j=0}^2 b_j(t)\, X^{\tt R}_j\,,
$$
starting from   the
 neutral element  in $SL(2,\mathbb{R})$, i.e.  $A(0)=I$,  then $A(t)$ satisfies the equation
\begin{equation}\label{eLA}
\dot{A}(t)A^{-1}(t)=-\sum_{j=0}^2b_j(t)M_j\equiv {\rm a}(t),
\end{equation}
and the  solution of Riccati equation (\ref{ricceq}) with initial
condition $y(0)=y_0$ is given by $y(t)=\Phi(A(t),y_0)$ \cite{CR02}.

Note that the r.h.s. in Eq. (\ref{eLA}) is a curve in
$T_ISL(2,\mathbb{R})$ that can be identified to a curve in the Lie algebra
$\mathfrak{sl}(2,\mathbb{R})$ of left-invariant vector fields on $SL(2,\mathbb{R})$
through the usual isomorphism: we relate each left-invariant vector field $X^{\tt L}$ to
the element $X^{\tt L}(I)\in T_ISL(2,\mathbb{R})$. From now on, we do not distinguish explicitly
 elements in $T_ISL(2,\mathbb{R})$ and its corresponding ones in $\mathfrak{sl}(2,\mathbb{R})$.

In summary, the general solution of Riccati equations (\ref{ricceq}) can be
obtained  through solutions of an equation like (\ref{eLA}) starting from $I$. Consequently, we have reduced the problem of finding the general solution of Riccati equations to determining the solution of Eq.
(\ref{eLA}) beginning at the neutral element of $SL(2,\mathbb{R})$. Note that, in a similar way, this procedure can be applied to any Lie system \cite{CL09}.

\section{Transformation laws of Riccati equations}\label{TL}
\noindent

In this Section we briefly describe an important property of Lie systems, in the
particular case of Riccati equations, which plays a very relevant r\^ole for
establishing, as indicated in \cite{CarRam},  integrability
criteria: {\it The group $\mathcal{G}$ of curves in a Lie group $G$ associated with
a Lie system, here  $SL(2, {\mathbb{R}})$, acts on the set  of these  Lie systems, here
Riccati equations}.

More explicitly, fixed a basis of  vector fields on $\overline{\mathbb{R}}$, for instance  $\{L_j\,|\,j=0,1,2\}$, which spans a Vessiot--Guldberg Lie algebra of vector fields isomorphic to $\mathfrak{sl}(2,\mathbb{R})$,  each Riccati equation  (\ref{ricceq}) can be considered as a
curve $(b_0(t),b_1(t),b_2(t))$ in $\mathbb{R}^3$. The point now is that
 each element  of the group of smooth curves in $SL(2, \mathbb{R})$, i.e. $\bar A\in \mathcal{G}\equiv{\rm Map}(\mathbb{R},\,SL(2,\mathbb{R}))$,
transforms every curve $y(t)$ in $\overline{\mathbb{R}}$
 into a new curve $y'(t)$ in $\overline{\mathbb{R}}$ given by $y'(t)=\Phi(\bar A(t),y(t))$. Moreover, the $t$-dependent change of variables $y'(t)=\Phi(\bar A(t),y(t))$ transforms the Riccati equation (\ref{ricceq})
 into a new
 Riccati equation with new $t$-dependent coefficients, $b'_0,b'_1, b'_2$ given by

  \begin{equation}\label{trans}
\left\{\begin{aligned}
b'_2&={\delta}^2\,b_2-\delta\gamma\,b_1+{\gamma}^2\,b_0+\gamma {\dot{\delta}}-\delta \dot{\gamma}\ ,\\
b'_1&=-2\,\beta\delta\,b_2+(\alpha\delta+\beta\gamma)\,b_1-2\,\alpha\gamma\,b_0
       +\delta \dot{\alpha}-\alpha \dot{\delta}+\beta \dot{\gamma}-\gamma \dot{\beta}\ ,   \\
b'_0&={\beta}^2\,b_2-\alpha\beta\,b_1+{\alpha}^2\,b_0+\alpha\dot{\beta}-\beta\dot{\alpha},
\end{aligned}\right.
\end{equation}
with
$$
\bar{A}(t)=\left(
\begin{matrix}
\alpha(t)&\beta(t)\\
\gamma(t)&\delta(t)
\end{matrix}\right).
$$
The above transformation defines an affine action (see e.g.  \cite{LM87} for the general definition of
this concept) of the group
 $\mathcal{G}$ on the set of
Riccati equations, see \cite{CarRam}.

The group $\mathcal{G}$ also acts on the set of equations of the form (\ref{eLA}) on $SL(2,\mathbb{R})$. In order to show this, note first that $\mathcal{G}$  acts on the left on the
set of curves in $SL(2,\mathbb{R})$ by left translations, i.e. given two curves $A(t)$ and $\bar A(t)$ in $SL(2,\mathbb{R})$, the curve $\bar A(t)$ transforms the curve $A(t)$  into a new one  $A'(t)=\bar A(t) A(t)$. Moreover, if
 $A(t)$ is a solution of Eq. (\ref{eLA}), then the new curve $A'(t)$ satisfies a new equation like (\ref{eLA}) but  with a different right
hand side ${\rm a}'(t)$. Differentiating the relation $A'(t)=\bar A(t) A(t)$ in terms of time and taking into account the form of (\ref{eLA}), we get that the relation between the curves ${\rm a}(t)$ and ${\rm a}'(t)$ in $\mathfrak{sl}(2,\mathbb{R})$ is
\begin{equation}
{\rm a}'(t)=\bar A(t){\rm a}(t)\bar A^{-1}(t)+\dot{\bar{A}}(t)\bar A^{-1}(t)
=-\sum_{j=0}^2b'_j(t)M_j\, \label{newricc}
\end{equation}
and such a relation implies the expressions (\ref{trans}). Conversely,  if  $A'(t)=\bar A(t) A(t)$ is  the solution for the  equation corresponding to the
curve ${\rm a}'(t)$ given by the transformation rule (\ref{newricc}), then $A(t)$ is the solution of Eq. (\ref{eLA}).

To sum up, we have shown that it is possible to associate each Riccati equation with an equation on
the Lie group $SL(2,\mathbb{R})$ and to define an infinite-dimensional group of 
transformations acting on the set of Riccati equations. Additionally, this
process 
can be easily derived in a similar way for any Lie system. In such a case, we
must consider an equation on a Lie group $G$ associated with the corresponding Lie
 system 
and the group $\mathcal{G}$ of curves in $G$ acting on the set of curves in $G$
in the
 form $A'(t)=L_{\bar A(t)}A(t)$ instead of $A'(t)=\bar A(t)A(t)$. This action
 induces
 other action of $\mathcal{G}$ on the set of equations of the form (\ref{eLA})
 but on the Lie group $G$. 
More explicitly, a curve $\bar A(t)\in \mathcal{G}$ transforms an equation on
$G$ of the form (\ref{eLA}) 
determined by a curve ${\rm a}(t)\subset T_IG$ into a new one determined by the
new curve 
${\rm a}'(t)\subset T_IG$ given by
\begin{equation}\label{TransConecc}
{\rm a}'(t)={\rm Ad}_{\bar A(t)}{\rm a}(t)+R_{\bar A^{-1}(t)*\bar A(t)}\dot{\bar{A}}(t).
\end{equation}

\section{Lie structure of an equation of transformation of Lie systems}
\indent

Our aim in this Section is to  construct a Lie system describing the curves in $SL(2,\mathbb{R})$
relating two Riccati equations associated with a pair of equations in
$SL(2,\mathbb{R})$ characterised by two 
curves ${\rm a}(t), {\rm
  a}'(t)\subset \mathfrak{sl}(2,{\mathbb{R}})$. By means of this Lie system we are
going to  explain  in next Sections the developments of \cite{CRL07, CarRamGra} and other works from a unified viewpoint.

Let us multiply Eq. (\ref{newricc}) on the right by $\bar A(t)$ to get
\begin{equation}\label{MatrixRicc}
\dot{\bar{A}}(t)={\rm a}'(t)\bar A(t)-\bar A(t){\rm a}(t)\,.
\end{equation}
If we consider Eq. (\ref{MatrixRicc}) as a first-order differential equation in the
coefficients of the
 curve $\bar A(t)$ in $SL(2,\mathbb{R})$, with
$$
\bar A(t)=\left(
\begin{matrix}
\alpha(t) &\beta(t)\\
\gamma(t)& \delta(t)
\end{matrix}\right)\,,\quad \alpha(t)\delta(t)-\beta(t)\gamma(t)=1,
$$
then system (\ref{MatrixRicc}) reads
\begin{equation}\label{FS}
\left(\begin{matrix}
\dot\alpha\\
\dot\beta\\
\dot\gamma\\
\dot\delta
\end{matrix}\right)
=
\left(\begin{matrix}
\frac{b'_1-b_1}{2}&b_2 &b'_0&0\\
-b_0& \frac{b'_1+b_1}{2}&0 &b'_0\\
-b'_2&0 &-\frac{b'_1+b_1}{2}& b_2\\
0&-b_2' &-b_0& -\frac{b'_1-b_1}{2}
\end{matrix}\right)
\left(\begin{matrix}
\alpha\\
\beta\\
\gamma\\
\delta
\end{matrix}\right).
\end{equation}

In order to determine the solutions $x(t)=(\alpha(t),\beta(t),
\gamma(t),\delta(t))$ of the above system relating two
different Riccati equations,  we should check that actually  the matrices $\bar A(t)$,
whose elements are the corresponding components of $x(t)$, are related to
matrices in $SL(2,\mathbb{R})$, i.e.
we have to verify that at any time
$\alpha\delta-\beta\gamma=1$. Nevertheless, we can drop such a restriction because it can be automatically implemented by a restraint on the initial conditions for
 the solutions and hence we can deal with the variables $\alpha,\beta,\gamma, \delta$ in the
 system  (\ref{FS}) as being independent. Consider now the vector fields
{\small
\begin{equation*}
\begin{array}{ll}
N_0=-\alpha\dfrac{\partial}{\partial\beta}-\gamma\dfrac{\partial}{\partial\delta}, &N'_0=\gamma\dfrac{\partial}{\partial\alpha}+\delta\dfrac{\partial}{\partial\beta},\cr
N_1=\frac
12\left(\beta\dfrac{\partial}{\partial\beta}+\delta\dfrac{\partial}{\partial\delta}-\alpha\dfrac{\partial}{\partial\alpha}-\gamma\dfrac{\partial}{\partial\gamma}\right),
&N'_1=\frac 12\left(\alpha\dfrac{\partial}{\partial\alpha}+\beta\dfrac{\partial}{\partial\beta}-\gamma\dfrac{\partial}{\partial\gamma}-\delta\dfrac{\partial}{\partial\delta}\right),\cr
N_2=\beta\dfrac{\partial}{\partial\alpha}+\delta\dfrac{\partial}{\partial\gamma},& N'_2=-\alpha\dfrac{\partial}{\partial\gamma}-\beta\dfrac{\partial}{\partial\delta},\nonumber
\end{array}
\end{equation*}}
satisfying the non-null commutation relations
\begin{eqnarray*}
&&\left[ N_0,N_1\right]=N_0, \qquad [N_0,N_2]=2  N_1, \qquad [N_1,N_2]=N_2,\cr
&&[N'_0,N'_1]=N'_0, \qquad [N'_0, N'_2]=2
  N'_1,\qquad [N'_1,  N'_2]=N'_2\,.\nonumber
\end{eqnarray*}
Note that as $[N_i,N'_j]=0$, for $i,j=0,1,2$, the linear system of differential equation (\ref{FS}) is a Lie system on $\mathbb{R}^4$  associated with a Lie algebra of vector fields isomorphic to $\mathfrak{g}\equiv\mathfrak{sl}(2,\mathbb{R})\oplus\mathfrak{sl}(2,\mathbb{R})$. This Lie
algebra decomposes into a direct sum of two Lie algebras of vector fields isomorphic to
$\mathfrak{sl}(2,\mathbb{R})$: the first one is spanned by $\{N_0,N_1,N_2\}$ and the second
one by $\{N'_0,N'_1,N'_2\}$.

If we  denote $x\equiv\left(\alpha,\beta,\gamma,\delta\right)\in \mathbb{R}^4$,  the
system (\ref{FS}) is a differential equations on $\mathbb{R}^4$
\begin{equation*}
\frac{dx}{dt}=N(t,x),
\end{equation*}
with $N$ being the $t$-dependent  vector field
\begin{equation*}
N(t,x)=\sum_{j=0}^2\left(b_\alpha(t)N_\alpha(x)+b'_\alpha(t)N'_\alpha(x)\right).
\end{equation*}

The vector fields $\{N_0,N_1,N_2,N'_0,N'_1,N'_2\}$ span a regular involutive distribution $\mathcal{D}$ with rank
three in almost any point of $\mathbb{R}^4$ and thus there exists, at least locally, a
first-integral. We can check that the function 
$$I:x\equiv (\alpha,\beta,\gamma,\delta)\in\mathbb{R}^4\longrightarrow I(x)\equiv\det x\equiv\alpha\delta-\beta\gamma \in \mathbb{R}$$
is a first-integral for the vector fields in the distribution $\mathcal{D}$. Moreover, such a first-integral is related to the
 determinant of the matrix $\bar A$ with coefficients given by the components
 of $x=(\alpha,\beta,\gamma,\delta)$. Therefore, if we have a solution of the system (\ref{FS})
with an initial condition $\det x(0)=\alpha(0)\delta(0)-\beta(0)\gamma(0)=1$,
then $ \det x(t)=1$  at any time $t$ and the solution can be understood
as a curve in $SL(2,\mathbb{R})$.

In summary, we have proved that:

\begin{theorem}\label{THLS} The curves in $SL(2,\mathbb{R})$ transforming equation (\ref{eLA}) into a new equation of the same form but characterised by a new curve ${\rm a}'(t)=-\sum_{j=0}^2b'_j(t)M_j\,$  are described through the solutions of the Lie system
\begin{equation}\label{Sys}
\frac{dx}{dt}=N(t,x)\equiv\sum_{j=0}^2\left(b_j(t)N_j(x)+b'_j(t)N'_j(x)\right)\,
\end{equation}
such that $\det x(0)=1$. Furthermore, the above Lie system is related to a non-solvable Vessiot--Guldberg Lie algebra isomorphic to $\mathfrak{sl}(2,\mathbb{R})\oplus\mathfrak{sl}(2,\mathbb{R})$.
\end{theorem}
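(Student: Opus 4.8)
The plan is to read the claim off the transformation theory already assembled in Section~\ref{TL}, so that essentially nothing new must be invented; the work is to organise three ingredients into one statement. First I would recall that, by the discussion preceding Eq.~(\ref{newricc}), a curve $\bar A(t)$ carries a solution $A(t)$ of Eq.~(\ref{eLA}) with right-hand side ${\rm a}(t)$ into the solution $A'(t)=\bar A(t)A(t)$ of the equation with right-hand side ${\rm a}'(t)$ precisely when the pair $({\rm a},{\rm a}')$ satisfies the relation (\ref{newricc}). Hence $\bar A(t)$ relates the two prescribed equations if and only if it solves the matrix differential equation (\ref{MatrixRicc}), obtained by right-multiplying (\ref{newricc}) by $\bar A(t)$. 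This is the characterisation I want; it remains to convert it into the stated Lie system and to control the determinant.

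Next I would pass to components. Writing $\bar A(t)$ with entries $(\alpha,\beta,\gamma,\delta)$ and substituting ${\rm a}=-\sum_j b_j M_j$ and ${\rm a}'=-\sum_j b'_j M_j$ with the explicit matrices (\ref{base_matrices}) into $\dot{\bar A}={\rm a}'\bar A-\bar A{\rm a}$, one expands the two matrix products and collects entries. This is a routine but slightly tedious calculation whose output is exactly the linear system (\ref{FS}). Reading off which scalar multiplies which first-order operator then identifies the right-hand side as $N(t,x)=\sum_{j=0}^{2}\bigl(b_j(t)N_j(x)+b'_j(t)N'_j(x)\bigr)$; tracking a single coefficient (e.g. the $b_2$ entries, coming from $b_2\,\bar A M_2$, reproduce $N_2=\beta\,\partial/\partial\alpha+\delta\,\partial/\partial\gamma$) shows that the term $-\bar A{\rm a}$ produces the fields $N_j$ and the term ${\rm a}'\bar A$ produces the fields $N'_j$. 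This is precisely Eq.~(\ref{Sys}).

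It then suffices to verify the two structural assertions. The commutation relations displayed for $\{N_0,N_1,N_2\}$ and for $\{N'_0,N'_1,N'_2\}$ show that each triple spans a Lie algebra of vector fields isomorphic to $\mathfrak{sl}(2,\mathbb{R})$, and since $[N_i,N'_j]=0$ these combine into $\mathfrak{sl}(2,\mathbb{R})\oplus\mathfrak{sl}(2,\mathbb{R})$; as $\mathfrak{sl}(2,\mathbb{R})$ is simple, the direct sum is non-solvable, giving the second claim. Finally I would check that $I(x)=\alpha\delta-\beta\gamma$ is annihilated by every $N_j$ and every $N'_j$, so that $\det x$ is constant along solutions; imposing $\det x(0)=1$ therefore forces $\det x(t)=1$ for all $t$, i.e. the solution is a genuine curve in $SL(2,\mathbb{R})$, which closes the identification with the curves $\bar A(t)$ sought.

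The main obstacle is purely the bookkeeping in the component expansion of (\ref{MatrixRicc}): one must correctly pair the left and right matrix multiplications with the two commuting copies of $\mathfrak{sl}(2,\mathbb{R})$ so that the coefficients $b_j$ and $b'_j$ land on $N_j$ and $N'_j$ exactly as stated, rather than being mixed or transposed. Once that pairing is pinned down by inspecting one or two entries, the remaining verifications (the commutators and the annihilation of $\det x$) are direct.
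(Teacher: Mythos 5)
Your proposal is correct and takes essentially the same route as the paper: you right-multiply (\ref{newricc}) by $\bar A(t)$ to get (\ref{MatrixRicc}), expand it in components to recover the linear system (\ref{FS}) with $-\bar A\,{\rm a}$ yielding the $N_j$ and ${\rm a}'\bar A$ yielding the $N'_j$, invoke the displayed commutation relations together with $[N_i,N'_j]=0$ for the $\mathfrak{sl}(2,\mathbb{R})\oplus\mathfrak{sl}(2,\mathbb{R})$ structure, and use $I(x)=\alpha\delta-\beta\gamma$ as a first integral to propagate $\det x(0)=1$. The only (harmless) cosmetic differences are that you check directly that each generator annihilates $\det x$, where the paper phrases the same fact via the rank-three distribution spanned by the six fields, and that you make explicit the simplicity argument behind non-solvability, which the paper merely asserts.
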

\begin{corollary} \label{CorCur} Given two Riccati equations associated with
  curves ${\rm a}'(t)$ and ${\rm a}(t)$ in
$\mathfrak{sl}(2,\mathbb{R})$ there always exists a curve $\bar A(t)$ in $SL(2,\mathbb{R})$
transforming the Riccati equation related to ${\rm a}(t)$ into the one associated with ${\rm a}'(t)$. If
furthermore  $\bar A(0)=I$, this curve is uniquely defined.
\end{corollary}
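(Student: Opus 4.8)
The plan is to read the Corollary off the linear structure of equation (\ref{MatrixRicc}) together with the first integral exhibited in the proof of Theorem \ref{THLS}. The essential observation is that (\ref{MatrixRicc}), namely $\dot{\bar A}(t)={\rm a}'(t)\bar A(t)-\bar A(t){\rm a}(t)$, is a \emph{linear} first-order system in the entries of $\bar A(t)$ — it is precisely the matrix form of the system (\ref{FS}) — whose coefficients depend continuously on $t$ through ${\rm a}(t)$ and ${\rm a}'(t)$. Hence the standard existence and uniqueness theorem for linear systems of ordinary differential equations applies, and for every prescribed initial datum $\bar A(0)$ there is a unique solution, defined for all $t$ because linear systems cannot blow up in finite time.

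First I would address existence. Choosing an initial condition with $\det\bar A(0)=1$, for instance $\bar A(0)=I$, I invoke the first integral $I(x)=\det x$ of the distribution $\mathcal{D}$ established in Theorem \ref{THLS}: since $\det\bar A$ is conserved along the flow, $\det\bar A(t)=1$ for all $t$, so the solution is a genuine curve in $SL(2,\mathbb{R})$. In particular $\bar A(t)$ is invertible at every time, and multiplying (\ref{MatrixRicc}) on the right by $\bar A^{-1}(t)$ recovers the transformation law (\ref{newricc}), i.e. ${\rm a}'(t)=\bar A(t){\rm a}(t)\bar A^{-1}(t)+\dot{\bar A}(t)\bar A^{-1}(t)$. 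This is exactly the statement that $\bar A(t)$ carries the Riccati equation attached to ${\rm a}(t)$ into the one attached to ${\rm a}'(t)$, which proves existence.

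For uniqueness under the extra normalisation $\bar A(0)=I$, I would note that \emph{any} curve transforming ${\rm a}(t)$ into ${\rm a}'(t)$ must obey the transformation law (\ref{newricc}), hence — after right multiplication by $\bar A(t)$ — equation (\ref{MatrixRicc}). Thus a transforming curve normalised by $\bar A(0)=I$ is a solution of the Cauchy problem for the linear system (\ref{MatrixRicc}) with initial datum $I$, and the uniqueness half of the linear existence–uniqueness theorem guarantees there is only one.

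The computation itself is routine; the only point that genuinely requires care — and where the hypotheses enter — is the passage from (\ref{MatrixRicc}) back to (\ref{newricc}), which is legitimate only if $\bar A(t)$ remains invertible. I expect this to be the crux of the argument, and it is resolved precisely by the conservation of $\det\bar A$: because ${\rm a}(t)$ and ${\rm a}'(t)$ are traceless, the trace computation ${\rm tr}(\bar A^{-1}\dot{\bar A})={\rm tr}({\rm a}')-{\rm tr}({\rm a})=0$ — equivalently, the first integral of Theorem \ref{THLS} — forces $\det\bar A$ to stay equal to its initial value $1$, keeping the solution inside $SL(2,\mathbb{R})$ for all time.
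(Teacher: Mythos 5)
Your proposal is correct and follows essentially the same route as the paper's own proof: existence and uniqueness of solutions for the system (\ref{FS}) (equivalently, its matrix form (\ref{MatrixRicc})), combined with the first integral $\det x$ from Theorem \ref{THLS} to keep the solution inside $SL(2,\mathbb{R})$. Your extra observations --- that linearity of the system guarantees global-in-time solutions, and the trace computation $\mathrm{tr}\bigl(\bar A^{-1}\dot{\bar A}\bigr)=\mathrm{tr}({\rm a}')-\mathrm{tr}({\rm a})=0$ confirming the conservation of $\det\bar A$ --- merely make explicit two points the paper leaves implicit, without changing the argument.
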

\begin{proof}
Given a matrix $A(0)\in SL(2,\mathbb{R})$ and an element $x(0)$ related to it, according to the theorem of existence and uniqueness of solutions for differential equations, the system (\ref{Sys}), with the chosen ${\rm a}'(t)$ and ${\rm a}(t)$, admits a solution $x(t)$ with initial condition $x(0)$. As ${\rm det}(x(t))={\rm det}(x(0))=1$, such a solution, considered as a matrix $\bar
A(t)$, belongs to
 $SL(2,\mathbb{R})$ and therefore  there exists a solution $x(t)$ for the system (\ref{Sys}) with initial condition $x(0)$ related to $\bar A(0)$. This proves the first statement of our corollary.

If the curve  $\bar A(t)$ connecting two curves in $\mathfrak{sl}(2,\mathbb{R})$
satisfies  $\bar A(0)=I$, it is the curve $x(t)$ in $\mathbb{R}^4$
being the solution
of system (\ref{Sys}) with initial condition $x(0)=(1,0,0,1)$, which is
uniquely determined because of  the theorem of existence and uniqueness of
solutions of systems of first-order differential equations.
\end{proof}

Even if we know that given two equations on the Lie group $SL(2,\mathbb{R})$ there
always exists a transformation relating both, in order to obtain such a curve
 we need to solve the Lie system (\ref{Sys}). Unfortunately, such a Lie system is associated with a non-solvable Lie algebra and it is
not easy in general to find its solutions, i.e. it is not integrable by
quadratures and therefore such a  curve  cannot be easily found
in the general case.

Nevertheless, we will explain many known properties and obtain new integrability conditions for Riccati equations by means of Theorem  \ref{THLS}. Furthermore, the procedure to obtain the Lie system (\ref{Sys}) can be generalised to deal with any Lie system related to a Lie group $G$ with Lie algebra $\mathfrak{g}$. In this general case, relation (\ref{TransConecc}) implies that
\begin{equation*}
\dot{\bar{A}}(t)=R_{\bar{A}(t)*I}{\rm a}'(t)-L_{\bar{A}(t)*I}{\rm a}(t). 
\end{equation*}
As $X^{\tt R}(t,\bar A)=R_{\bar{A}*I}{\rm a}'(t)$ is a $t$-dependent
right-invariant vector field 
on $G$ and $X^{\tt L}(t,\bar A)=-L_{\bar{A}*I}{\rm a}(t)$ a left-invariant one,
the above system 
is the equation determining the integral curves of a time-dependent vector
field with values in 
the linear space spanned by right- and left- invariant vector fields on
$G$. Note that the family
 of left-invariant (right-invariant) vector fields on $G$ spans a Lie algebra
 isomorphic to 
$\mathfrak{g}$ and, as right- and left-invariant vector fields commute among
them, the set 
of vector fields spanned by both families is a Lie algebra of vector fields
isomorphic to 
$\mathfrak{g}\oplus\mathfrak{g}$. In this way, we get that the above system,
relating two Lie
 systems associated with curves ${\rm a}(t)$ and ${\rm a}'(t)$ in
 $\mathfrak{g}$, 
is a Lie system related to a Vessiot--Guldberg Lie algebra isomorphic to $\mathfrak{g}\oplus\mathfrak{g}$.

\section{Lie systems and integrability conditions}
\indent

In this section some integrability conditions are  analysed
from the perspective of the theory
of Lie systems  with $SL(2,\mathbb{R})$ as associated Lie group, with the aim of giving a unified approach to the reduction and
transformations procedures described in \cite{CRL07, CarRamGra}. More explicitly, these
methods are related to conditions for the existence of a curve in a previously
chosen family of curves in $SL(2,\mathbb{R})$ connecting a curve ${\rm a}(t)\subset\mathfrak{sl}(2,\mathbb{R})$ with  a curve ${\rm a}'(t)$
in a
 solvable Lie subalgebra of $\mathfrak{sl}(2,\mathbb{R})$. It is also shown that
 this viewpoint
enables us to explain many of the previous results scattered
 in the literature about this topic and to prove other new properties.

As it was shown in Sec. \ref{TL}, if the  curve $\bar A(t)\subset SL(2,\mathbb{R})$ transforms the equation on this Lie group defined by the  curve ${\rm a}(t)$ into
 another one characterised by ${\rm a}'(t)$ and $A'(t)$
is a solution for the equation similar to (\ref{eLA}) for the primed system, i.e. characterised by ${\rm a}'(t)$, then
$A(t)=\bar A^{-1}(t)A'(t)$ is a solution for the equation in $SL(2,\mathbb{R})$ characterised by ${\rm a}(t)$. Moreover, if ${\rm a}'(t)$ lies in a solvable Lie subalgebra of $\mathfrak{sl}(2,\mathbb{R})$,
we can obtain $A'(t)$ in many ways, e.g. by quadratures or by other methods
as those used in \cite{CarRamGra}. Then, once $A'(t)$ is obtained, the  knowledge  of the  curve $\bar A(t)$ transforming  the
 curve ${\rm a}(t)$ into ${\rm a}'(t)$ provides the curve $A(t)$.

Therefore if we begin with a curve
 ${\rm a}'(t)$ in a solvable Lie subalgebra of $\mathfrak{sl}(2,\mathbb{R})$ and consider the solutions for the system (\ref{Sys}) in a subset of $SL(2,\mathbb{R})$, we can
relate the curve ${\rm a}'(t)$, and therefore its Riccati equation, to other possible curves ${\rm a}(t)$, finding in this way
a family of Riccati equations that can be exactly solved. Note that, if we do not consider solutions of the system (\ref{Sys}) in a subset of $SL(2,\mathbb{R})$, it is generally difficult to check whether a particular Riccati equation belongs to the family of integrable Riccati equations so obtained. 

Suppose we impose some restrictions on the family of curves solutions of the system (\ref{Sys}), for instance  $\beta=\gamma=0$. Consequently, the system may not have solutions compatible with such restrictions, i.e. it may be impossible to connect the  curves ${\rm a}(t)$ and ${\rm a}'(t)$ by a curve in $SL(2,\mathbb{R})$ satisfying the assumed restrictions. This gives rise to some
compatibility conditions for the existence of these special solutions, some of them algebraic and
other differential
ones, between the $t$-dependent  coefficients of ${\rm a}'(t)$ and ${\rm
  a}(t)$. It will be shown later on that
such restrictions correspond to integrability conditions previously proposed  in the literature. 

Therefore, there are two ingredients to take into account:
\begin{enumerate}
 \item {\it The equations on the Lie group characterised by  curves ${\rm a}'(t)$ for which
we can obtain an explicit solution}.
We always suppose that ${\rm a}'(t)$ is related to a solvable Lie subalgebra of
$\mathfrak{sl}(2,\mathbb{R})$ and we leave open  other possible restrictions for further
study.
\item {\it The conditions imposed on the solutions of system} (\ref{FS}). We follow two principal approaches in next Sections
  where the
solutions of this system are related to curves in certain one-parameter or two-parameter subsets of $SL(2,\mathbb{R})$.
\end{enumerate}

Consider the next example of our theory: suppose we try to connect any ${\rm a}(t)$ with a final curve of the form ${\rm a}'(t)=-D(t)(c_0{\rm a}_0+c_1{\rm a}_1+c_2{\rm a}_2)$,
where $c_0,c_1$ and $c_2$ are real numbers. In this way, the
system (\ref{FS})  describing the curve $\bar A(t)\subset SL(2,\mathbb{R})$ connecting these curves is
\begin{equation}\label{Lie2}
\frac{dx}{dt}=\sum_{j=0}^2\left(b_j(t)N_j(x)+D(t)
c_j N'_j(x)\right)=N(t,x).
\end{equation}
Now, as the vector field
\begin{equation*}
N'=\sum_{j=0}^2c_j N'_j,
\end{equation*}
is such that
\begin{equation*}
\left[N_j,N'\right]=0,\quad\quad  j=0,1,2,
\end{equation*}
the Lie system (\ref{Lie2}) is related to a non-solvable Lie algebra of vector fields
 isomorphic to $\mathfrak{sl}(2,\mathbb{R})\oplus \mathbb{R}$. Hence, it is not integrable by
quadratures and the solution cannot be easily found in the general case. Nevertheless, note that system (\ref{Lie2}) always has a solution.

In this way, we can consider some particular cases of Lie system (\ref{Lie2}) for which the
resulting system of differential equations can be easily integrated. As a first
instance, take $x$ related to a one-parameter family of
elements of $SL(2,\mathbb{R})$. Such a restriction implies that system (\ref{Lie2}) has not
always a solution because sometimes it is not possible   to connect ${\rm
  a}(t)$ and ${\rm a}'(t)$ by means of the chosen
 family of curves. This fact  induces differential and/or algebraic restrictions on the
 initial $t$-dependent  functions $b_j$, with $j=0,1,2$, that
 describe some  known integrability conditions and may be some new ones developing the ideas of \cite{CRL07}. From this viewpoint we can  obtain new
 integrability conditions that  can be used, for instance, to obtain exact solutions.

Otherwise, if we choose a two-parameter set for the restriction, we  find in
some cases that we need a particular solution of the initial Riccati equation
to obtain the reduction of the given Riccati equation into an integrable one. This is the point of view shown in \cite{CarRamGra} where integrability conditions were related to reduction methods.

\section{Description of known integrability conditions}\label{DIC}
\indent

Let us first remark that Lie systems on $G$ of the form (\ref{eLA}) and determined by a constant curve,
${\rm a}=-\sum_{j=0}^2 c_j M_j$,
are integrable and consequently the same happens for
curves of the form ${\rm a}(t)=-D(t)\left(\sum_{j=0}^2 c_j M_j\right)$, where $D$ is any non-vanishing function, because a time-reparametrisation reduces the problem to
the previous one.

Our aim in this Section is to determine
 the curves $\bar A(t)$ in $SL(2,\mathbb{R})$ relating two
equations on $SL(2,\mathbb{R})$ characterised by the curves ${\rm a}(t)$ and ${\rm a}'(t)=-D(t)(c_0M_0+c_1M_1+c_2M_2)$ with $D(t)$ a non-vanishing function and $c_0$, $c_1$ and $c_2$ real constants such that $c_0c_2\neq 0$. As the final
equation is integrable, the transformation
establishing the relation to such a final integrable equation allows us to find
by quadratures  the solution of the initial equation and, therefore, the solution for its associated Riccati equation. In order to get such a
transformation, we look for curves $\bar A(t)$ in $SL(2,\mathbb{R})$ satisfying certain conditions in order to get an integrable equation (\ref{Lie2}).
Nevertheless, under the assumed restrictions, we may  obtain a
 system of differential equations admitting no solution. As an
 application,
we  show that many known results can be
recovered  and explained in this way.

We have already showed that the Riccati equations (\ref{ricceq}) with either $b_0\equiv 0$ or $b_2\equiv 0$ are reducible to linear
differential equations and therefore they are always
integrable. Hence, they are not interesting in our study and we focus our attention on reducing a Riccati equation (\ref{ricceq}), with $b_0b_2\ne 0$ in an open
interval in $t$, into an integrable one by means of the action of a curve in $SL(2,\mathbb{R})$. With this aim,
 we consider the family of curves in $SL(2,\mathbb{R})$ with  $\beta=0$ and $\gamma=0$, i.e. we take curves of the form
$$\bar A(t)=\left(\begin{matrix}\alpha(t)&0\\0&\delta(t)\end{matrix}\right)\in SL(2,\mathbb{R})\,,\quad\alpha(t)\delta(t)=1.$$
 We already pointed out that a curve $\bar A(t)$ in $SL(2,\mathbb{R})$ induces a $t$-dependent change of variables in $\bar{\mathbb{R}}$ given by $y'(t)=\Phi(\bar A(t),y(t))$. In view of (\ref{Action}) and as $\alpha\delta=1$, we get that, in our case, such a change of variables is given by
\begin{equation}\label{yprime}
y'=\alpha^2(t)y=G(t)y\,,\quad G(t)\equiv \frac{\alpha(t)}{
\delta(t)}>0.
\end{equation}
In view of the relations (\ref{trans}), the initial Riccati equation is transformed by means of the curve $\bar A(t)$ into the new Riccati equation with $t$-dependent coefficients
$$b'_2=\delta^2\,b_2\,,\qquad b'_1=\alpha\,\delta\,b_1+\dot \alpha\,\delta-\alpha\,\dot
\delta\,,\qquad b'_0=\alpha^2\, b_0.$$
Furthermore, the functions $\alpha$ and $\delta $ are solutions of
 system (\ref{FS}), which in this case reads
\begin{equation}\label{RLFS}
\left(\begin{matrix}
\dot\alpha\\
0\\
0\\
\dot\delta
\end{matrix}\right)
=\left(\begin{matrix}
\frac{b'_1-b_1}{2}&b_2 &b'_0&0\\
-b_0& \frac{b'_1+b_1}{2}&0 &b'_0\\
-b'_2&0 &-\frac{b'_1+b_1}{2}& b_2\\
0&-b_2' &-b_0& -\frac{b'_1-b_1}{2}
\end{matrix}\right)\left(
\begin{matrix}
\alpha\\
0\\
0\\
\delta
\end{matrix}\right).
\end{equation}
The existence of particular solutions for the above system related to elements of $SL(2,\mathbb{R})$ and satisfying the required conditions determines integrability conditions for Riccati equations by the described method. Thus, let us analyse the existence of such solutions to get these integrability conditions.

From some of the relations of the system (\ref{RLFS}), we get that
$$-b_0\,\alpha+b'_0\, \delta=0\,,\qquad -b_2'\,\alpha+b_2\,\delta=0.$$
As $\alpha(t)\delta(t)= 1$, the above relations imply that $b_0b_2= b'_0b'_2$ and 
\begin{equation*}
\alpha^2=\frac{b_0'}{b_0}=\frac{b_2}{b_2'}\equiv G>0\,.
\end{equation*}
Hence, the transformation formulas (\ref{trans}) reduce to
\begin{equation}
b'_2=\alpha^{-2}\,b_2\,,\qquad b'_1=b_1+2\frac{\dot \alpha}\alpha
\,,\qquad b'_0=\alpha^2 b_0\,.\label{transfb}
\end{equation}

Then, in order to exist a $t$-dependent function $D$ and two real constants
$c_0$ and $c_2$, with $c_0c_2\neq 0$, such that
$b'_2=Dc_2$ and $b'_0=Dc_0$, the function $D$ must be given by
\begin{equation*}
D^2c_0c_2=b_0b_2\Longrightarrow D=\pm\sqrt{\frac{b_0b_2}{c_0c_2}}\,,
\end{equation*}
where we have used that $b'_0b'_2=b_0b_2$. On the other hand, as $b'_0/b_0=\alpha^2>0$, we have to fix the sign $\kappa$ of the function $D$ in order to satisfy this relation, i.e. ${\rm sg}(c_0D)={\rm sg}(b_0)$. Therefore,
$$
\kappa={\rm sg}(D)={\rm sg}(b_0/c_0).
$$
Also, as $b_0b_2=b'_0b'_2$, we get that ${\rm sg}(b_0b_2)={\rm sg}(c_0c_2D^2)={\rm sg}(c_0c_2)$. Furthermore, in view of the relations (\ref{transfb}), $\alpha$ is determined, up to a sign, by
\begin{equation}\label{otroalfa}
\alpha=\sqrt{\frac{Dc_0}{b_0}}=\left(\frac{c_0}{c_2}\,\frac{b_2}{b_0} \right)^{1/4}\,,
\end{equation}
and therefore the change of variables (\ref{yprime}) reads:
\begin{equation}\label{Chang}
y'=\frac{D(t)c_0}{b_0(t)}y\,.
\end{equation}

Finally, as a consequence of (\ref{transfb}), in order for $b'_1$ to be the product $b'_1=c_1\, D$, we see that
\begin{equation}\label{eq10}
b_1+2\, \frac{\dot \alpha}{\alpha}=\kappa c_1 \sqrt{\frac{b_0b_2}{c_0c_2}}\,.
\end{equation}
Using (\ref{otroalfa}) we get
$$4\,\frac{\dot
  \alpha}{\alpha}=\frac 1{\alpha^4}\, \frac {d\alpha^4}{dt}=\frac{b_0}{b_2}\,\frac
d{dt}\left(\frac{b_2}{b_0}\right)=\frac{b_0}{b_2}\,\, \frac{\dot b_2b_0-\dot
  b_0 b_2}{b_0^2}=\frac{\dot b_2}{b_2}-\frac{\dot b_0}{b_0},
$$
and replacing $2\dot\alpha/\alpha$ in (\ref{eq10}) for the value obtained above, we see that the
required integral condition is
\begin{equation*}
\sqrt{\frac{c_0c_2}{b_0b_2}}\left[b_1+\frac{1}{2}\left(\frac{\dot b_2}{b_2}-\frac{\dot b_0}{b_0}\right)\right]=\kappa c_1\,.
\end{equation*}

Conversely, it can be verified that if the above integrability condition holds and $D^2c_0c_2=b_0b_2$, then the change of variables (\ref{Chang}) transforms the Riccati equation (\ref{ricceq}) into $dy'/dt=D(t)(c_0+c_1y'+c_2y'^2)$, with $c_0c_2\neq 0$.

In summary:

\begin{theorem}\label{TU} The necessary and sufficient condition
for the existence of a  transformation
\begin{equation*}
y'=G(t)y,\quad G(t)>0,
\end{equation*}
 relating the Riccati equation
\begin{equation*}
\frac{dy}{dt}=b_0(t)+b_1(t)y+b_2(t)y^2\,,  \qquad b_0b_2\ne 0,
\end{equation*}
to an integrable one given by
\begin{equation}
\frac{dy'}{dt}=D(t)(c_0+c_1y'+c_2y'^2)\,,\quad c_0c_2\neq 0\label{eqDcs}
\end{equation}
where $c_0, c_1, c_2$ are real numbers and $D(t)$ is a non-vanishing function, are
\begin{equation}
D^2c_0c_2=b_0b_2,\qquad \left(b_1+\frac{1}{2}\left(\frac{\dot b_2}{b_2}-\frac{\dot b_0}{b_0}\right)\right)\sqrt{\frac{c_0c_2}{b_0b_2}}=\kappa c_1,\label{DinTh2}
\end{equation}
where $\kappa={\rm sg}(D)=sg(b_0/c_0)$. The  transformation is then uniquely defined by
\begin{equation*}
y'=\sqrt{\frac{b_2(t)c_0}{b_0(t)c_2}}\,y\,.
\end{equation*}
\end{theorem}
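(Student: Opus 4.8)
The plan is to realise the scaling transformation $y'=G(t)y$ as the action of a diagonal curve in $SL(2,\mathbb{R})$ and then to match coefficients. First I would observe that, by the action (\ref{Action}), a curve $\bar A(t)={\rm diag}(\alpha,\delta)$ with $\alpha\delta=1$ induces precisely $y'=(\alpha/\delta)y=\alpha^2 y$, so that setting $G=\alpha^2$ identifies the admissible transformations $y'=G(t)y$ with $G>0$ with exactly these diagonal curves. This is the restriction $\beta=\gamma=0$, and it collapses the general transformation law (\ref{trans}), after using $\alpha\delta=1$, to
\begin{equation*}
b'_2=\alpha^{-2}b_2,\qquad b'_1=b_1+2\frac{\dot\alpha}{\alpha},\qquad b'_0=\alpha^2 b_0.
\end{equation*}

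Next I would impose that the transformed equation has the target form (\ref{eqDcs}), i.e.\ $b'_j=Dc_j$ for $j=0,1,2$. Multiplying the expressions for $b'_0$ and $b'_2$ eliminates $\alpha$ and yields the algebraic condition $b_0 b_2=D^2 c_0 c_2$, while dividing them gives $\alpha^4=(c_0/c_2)(b_2/b_0)$, hence the unique positive value $G=\alpha^2=\sqrt{b_2 c_0/(b_0 c_2)}$ recorded in the statement. The sign bookkeeping enters exactly here: since $\alpha^2>0$, the relation $b'_0=\alpha^2 b_0=Dc_0$ forces ${\rm sg}(b_0)={\rm sg}(Dc_0)$, which is precisely $\kappa={\rm sg}(D)={\rm sg}(b_0/c_0)$, and consistency with $b'_2=Dc_2$ then demands ${\rm sg}(c_0 c_2)={\rm sg}(b_0 b_2)$, compatible with $b_0 b_2=D^2 c_0 c_2$.

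The remaining, differential condition comes from the $b'_1$ equation. The key computational step is to evaluate $2\dot\alpha/\alpha$ from the explicit formula (\ref{otroalfa}) by logarithmic differentiation: $4\dot\alpha/\alpha=(d/dt)\log\alpha^4=\dot b_2/b_2-\dot b_0/b_0$, so $2\dot\alpha/\alpha=\tfrac12(\dot b_2/b_2-\dot b_0/b_0)$. Substituting into $b_1+2\dot\alpha/\alpha=Dc_1=\kappa c_1\sqrt{b_0 b_2/(c_0 c_2)}$ and multiplying through by $\sqrt{c_0 c_2/(b_0 b_2)}$ (legitimate since the two products have equal sign) yields the stated condition in (\ref{DinTh2}). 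For the converse I would run the argument backwards: assuming both conditions, define $\alpha$ by (\ref{otroalfa}), verify directly that the three transformed coefficients reduce to $Dc_0,Dc_1,Dc_2$, and conclude that the change of variables (\ref{Chang}) sends the original Riccati equation to (\ref{eqDcs}).

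The main obstacle I expect is not any single manipulation but keeping the sign conventions coherent throughout. Because the transformation is required to satisfy $G>0$, equivalently $\alpha^2>0$, while $D$, $c_0$ and $c_2$ may each carry either sign, one must track $\kappa={\rm sg}(D)$ carefully and check that the two sign constraints ${\rm sg}(D)={\rm sg}(b_0/c_0)$ and ${\rm sg}(c_0 c_2)={\rm sg}(b_0 b_2)$ are simultaneously compatible with the square-root expression for $D$; everything else reduces to a direct substitution into (\ref{trans}).
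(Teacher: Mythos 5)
Your proposal is correct and follows essentially the same route as the paper's own proof in Sec.\ 7: you restrict to diagonal curves $\bar A(t)={\rm diag}(\alpha,\alpha^{-1})$ in $SL(2,\mathbb{R})$, collapse (\ref{trans}) to $b'_2=\alpha^{-2}b_2$, $b'_1=b_1+2\dot\alpha/\alpha$, $b'_0=\alpha^2 b_0$, match $b'_j=Dc_j$ to get the algebraic condition $D^2c_0c_2=b_0b_2$ together with the sign constraint $\kappa={\rm sg}(D)={\rm sg}(b_0/c_0)$, and obtain the differential condition by logarithmic differentiation of $\alpha^4$, exactly as in the paper. The only cosmetic difference is that the paper reads the relations $b_0b_2=b'_0b'_2$ and $\alpha^2=b'_0/b_0=b_2/b'_2$ off the restricted Lie system (\ref{RLFS}), whereas you derive them directly from the collapsed transformation law; the content, including the one-line verification of the converse, is identical.
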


As a consequence of Theorem \ref{TU}, given a Riccati equation
\begin{equation*}
\frac{dy}{dt}=b_0(t)+b_1(t)y+b_2(t)y^2\,, \qquad b_0(t)b_2(t)\ne 0,
\end{equation*}
if there are real constants $c_0,c_1$ and $c_2$, with $c_0c_2\neq 0$, such that
\begin{equation*}
\sqrt{\frac{c_0c_2}{b_0b_2}}\left(b_1+\frac{1}{2}\left(\frac{\dot b_2}{b_2}-\frac{\dot b_0}{b_0}\right)\right)=\kappa c_1,
\end{equation*}
there exists a $t$-dependent linear change of variables transforming the given equation into
 an integrable Riccati equation of the form
\begin{equation}
\frac{dy'}{dt}=D(t)(c_0+c_1y'+c_2y'^2), \qquad c_0c_2\neq 0,\label{ricc1dim}
\end{equation}
and the function $D$ is given by (\ref{DinTh2}) with the sign determined by $\kappa$. 

From the previous results, it can be derived the following corollary.
\begin{corollary}\label{CTU}
A Riccati equation (\ref{ricceq}) with  $b_0b_2\ne 0$ can be transformed into a Riccati equation of the form (\ref{ricc1dim}) by a $t$-dependent  change of variables $y'=G(t)y$, with $G(t)>0$, if and only if
\begin{equation}
\frac{1}{\sqrt{|b_0b_2|}}\left(b_1+\frac{1}{2}\left(\frac{\dot b_2}{b_2}-\frac{\dot
      b_0}{b_0}\right)\right)=K,
\label{resCor2}
\end{equation}
for a certain real constant $K$. In such a case, the Riccati equation (\ref{ricceq}) is integrable by quadratures.
\end{corollary}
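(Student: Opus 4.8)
The plan is to derive the corollary directly from Theorem~\ref{TU} by showing that the existence of suitable constants $c_0, c_1, c_2$ with $c_0 c_2 \neq 0$ is equivalent to the single condition (\ref{resCor2}). The key observation is that Theorem~\ref{TU} provides two requirements: the algebraic constraint $D^2 c_0 c_2 = b_0 b_2$ (which merely fixes $D$ once $c_0, c_2$ are chosen) and the genuine integrability condition relating $b_1$ and the logarithmic derivative term to $\kappa c_1 \sqrt{c_0 c_2/(b_0 b_2)}$. First I would isolate the quantity that the transformation must render constant, namely
\begin{equation*}
\Theta(t) \equiv \frac{1}{\sqrt{|b_0 b_2|}}\left(b_1 + \frac{1}{2}\left(\frac{\dot b_2}{b_2} - \frac{\dot b_0}{b_0}\right)\right),
\end{equation*}
and compare it with the right-hand side of the second relation in (\ref{DinTh2}).

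Next I would argue both directions of the equivalence. For the forward implication, suppose the transformation $y' = G(t) y$ with $G > 0$ into an equation of the form (\ref{ricc1dim}) exists. Then Theorem~\ref{TU} forces (\ref{DinTh2}) to hold; since $c_0, c_1, c_2$ are constants and $\kappa = \mathrm{sg}(b_0/c_0)$ is a fixed sign, the right-hand side $\kappa c_1 \sqrt{c_0 c_2/(b_0 b_2)}$ equals $\kappa c_1 \sqrt{c_0 c_2}/\sqrt{b_0 b_2}$. I would absorb the constant factor $\kappa c_1 \sqrt{|c_0 c_2|}$ into a single real constant $K$, after checking that $\mathrm{sg}(b_0 b_2) = \mathrm{sg}(c_0 c_2)$ (established in the text preceding the theorem) guarantees $\sqrt{c_0 c_2/(b_0 b_2)} = \sqrt{|c_0 c_2|/|b_0 b_2|}$, so that $\Theta(t) = K$ for the constant $K = \kappa c_1 \sqrt{|c_0 c_2|}$. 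For the converse, given that $\Theta \equiv K$ for some real $K$, I would exhibit an explicit choice of constants: pick $c_0 = c_2 = \mathrm{sg}(b_0 b_2)$ (or any pair with $\mathrm{sg}(c_0 c_2) = \mathrm{sg}(b_0 b_2)$ and $c_0 c_2 \neq 0$), set $c_1 = \kappa K / \sqrt{|c_0 c_2|}$, and define $D$ by the first equation in (\ref{DinTh2}). One then checks directly that (\ref{DinTh2}) is satisfied, so Theorem~\ref{TU} applies and yields the desired transformation.

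Finally, the last assertion—that the resulting Riccati equation is integrable by quadratures—follows from the remark opening Section~\ref{DIC}: an equation of the form $dy'/dt = D(t)(c_0 + c_1 y' + c_2 y'^2)$ with $D$ non-vanishing is reducible by a time-reparametrisation $d\tau/dt = D(t)$ to an autonomous Riccati equation, which admits a constant solution and is therefore integrable by quadratures. Since $y' = G(t) y$ is an explicit change of variables, the solution of the original equation is recovered without further integration, so it too is integrable by quadratures.

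\emph{Main obstacle.} The routine computation is trivial, so the only delicate point is bookkeeping of signs: one must verify that the sign constraints $\kappa = \mathrm{sg}(b_0/c_0)$ and $\mathrm{sg}(b_0 b_2) = \mathrm{sg}(c_0 c_2)$ are mutually consistent with an arbitrary real $K$, and that the square roots $\sqrt{c_0 c_2/(b_0 b_2)}$ and $\sqrt{|b_0 b_2|}$ combine correctly when passing between the signed form (\ref{DinTh2}) and the absolute-value form (\ref{resCor2}). Handling the case $K = 0$ (forcing $c_1 = 0$) and confirming that a valid choice of $c_0, c_1, c_2$ always exists for every real $K$ is where care is needed; everything else is immediate from Theorem~\ref{TU}.
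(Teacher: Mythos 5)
Your proposal is correct and follows essentially the route the paper intends: Corollary \ref{CTU} is stated there without a separate proof, as an immediate consequence of Theorem \ref{TU}, via exactly the constant-absorption argument you give (forward direction reading $K=\kappa c_1/\sqrt{|c_0c_2|}$ off (\ref{DinTh2}) using ${\rm sg}(c_0c_2)={\rm sg}(b_0b_2)$, converse by exhibiting admissible constants, and quadrature-integrability from the time-reparametrisation remark opening Sec.~\ref{DIC}). Two bookkeeping slips remain, both harmless for a normalised choice with $|c_0c_2|=1$: the concrete pair $c_0=c_2={\rm sg}(b_0b_2)$ always yields $c_0c_2=1$ and so fails when $b_0b_2<0$ (take instead, e.g., $c_0=1$, $c_2={\rm sg}(b_0b_2)$, as your parenthetical hedge already permits), and for a general admissible pair the second relation in (\ref{DinTh2}) forces $c_1=\kappa K\sqrt{|c_0c_2|}$ rather than $\kappa K/\sqrt{|c_0c_2|}$.
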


According to Theorem \ref{TU}, if we start with the integrable Riccati
Eq. (\ref{ricc1dim}), we can
obtain the set of all Riccati equations that can be reached from it by means of a
transformation of the form (\ref{yprime}).
\begin{corollary}\label{C2TU} Given an integrable Riccati equation
\begin{equation*}
\frac{dy}{dt}=D(t)(c_0+c_1y+c_2y^2),\qquad c_0c_2\neq 0,
\end{equation*}
with $D(t)$ a non-vanishing function, the set of Riccati equations which can be  obtained with a
transformation $y'=G(t)y$, with $G(t)>0$, are
those of the form:
\begin{equation*}
\frac{dy'}{dt}=b_0(t)+\left( \frac{\dot b_0(t)}{b_0(t)}-\frac{\dot D(t)}{D(t)}+c_1D(t)\right) y'+\frac{D^2(t)c_0c_2}{b_0(t)}y'^2\,,
\end{equation*}
with 
$$G=\frac{Dc_0}{\sqrt{b_0}}\,.$$
\end{corollary}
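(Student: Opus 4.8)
The plan is to treat this as a direct application of the transformation laws already established, specialised to the diagonal one-parameter family used in Theorem \ref{TU}. Recall from the discussion preceding that theorem that a curve $\bar A(t)=\mathrm{diag}(\alpha(t),\alpha^{-1}(t))\in SL(2,\mathbb{R})$ induces the point transformation $y'=\alpha^2 y=G(t)y$ with $G>0$ (cf. (\ref{yprime})), and that under such a curve the coefficients of a Riccati equation transform as $b'_0=G\,b_0$, $b'_1=b_1+\dot G/G$, $b'_2=G^{-1}b_2$. These are exactly the specialisations of (\ref{trans}) to $\beta=\gamma=0$, $\alpha\delta=1$, and I would take them as the starting point rather than rederiving them.

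First I would feed the integrable equation into these laws, i.e. set the source coefficients to $b_0=Dc_0$, $b_1=Dc_1$, $b_2=Dc_2$. Equivalently, and perhaps more transparently, I would simply substitute $y=y'/G$ into $\dot y=D(c_0+c_1y+c_2y^2)$ and use $\dot y'=\dot G\,y+G\dot y$; a one-line computation gives the transformed equation $\dot y'=GDc_0+(Dc_1+\dot G/G)y'+(Dc_2/G)y'^2$. Reading off the constant term shows that the new independent coefficient, which I denote $b_0$, equals $G\,Dc_0$; this pins down the transformation through $G=b_0/(Dc_0)$, and in particular $G>0$ forces $\mathrm{sg}(b_0)=\mathrm{sg}(Dc_0)$.

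Next I would substitute this value of $G$ back into the remaining two coefficients. Using $\dot G/G=\dot b_0/b_0-\dot D/D$ (valid since $c_0$ is constant) turns the linear coefficient into $\dot b_0/b_0-\dot D/D+c_1D$, while the quadratic coefficient becomes $Dc_2\cdot Dc_0/b_0=D^2c_0c_2/b_0$, which are precisely the expressions in the statement. The identity $b_0\,b_2=D^2c_0c_2$ obtained in passing is the expected invariant (it is the same relation $b_0b_2=b'_0b'_2$ underlying Theorem \ref{TU}) and serves as a consistency check.

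Finally, to obtain the set equality asserted by the corollary I would argue both inclusions. The computation above shows that every equation reachable by some admissible $G$ has the displayed form, giving one inclusion; conversely, given any equation of that form with $\mathrm{sg}(b_0)=\mathrm{sg}(Dc_0)$, the explicit choice $G=b_0/(Dc_0)>0$ reproduces it, giving the other. I expect no genuine obstacle here: the whole argument is a short substitution, and the only points requiring care are the bookkeeping of the $\dot G/G$ term and the positivity constraint on $G$, which is exactly what restricts the free functional parameter $b_0$ to have the sign of $Dc_0$.
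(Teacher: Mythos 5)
Your proposal is correct and is essentially the paper's own argument: the paper gives no separate proof of Corollary \ref{C2TU}, obtaining it by reading Theorem \ref{TU} backwards, i.e.\ via the specialisation (\ref{transfb}) of the transformation law (\ref{trans}) to diagonal curves, and that is exactly what your substitution $y=y'/G$ computes; your two-inclusion argument at the end is the right way to justify the ``set of all equations'' phrasing, which the paper leaves implicit.

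One discrepancy deserves explicit note: you derive $G=b_0/(Dc_0)$, whereas the corollary as printed states $G=Dc_0/\sqrt{b_0}$. Your formula is the correct one. The constant term of the transformed equation is $GDc_0$, and matching it with $b_0$ forces $G=b_0/(Dc_0)$, equivalently $G=\sqrt{b_0c_2/(b_2c_0)}$ once $b_2=D^2c_0c_2/b_0$ is imposed --- precisely the reciprocal of the transformation $y'=\sqrt{b_2c_0/(b_0c_2)}\,y$ of Theorem \ref{TU}, as it must be since the corollary inverts that theorem. The printed expression $Dc_0/\sqrt{b_0}$ is inconsistent with the displayed family: for instance, with $D=c_0=c_2=1$, $c_1=0$ and $b_0(t)=e^{2t}$ it gives $G=e^{-t}$, whose image equation has constant term $e^{-t}\neq e^{2t}$. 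So this is a typo in the paper which your derivation silently corrects; your positivity bookkeeping ${\rm sg}(b_0)={\rm sg}(Dc_0)$ is likewise the correct constraint on the free functional parameter $b_0$.
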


Therefore starting with  an integrable equation  we can generate a family of
solvable Riccati equations whose coefficients are parametrised by a non-vanishing function $b_0$. Moreover, the integrability condition
to check whether a Riccati equation belongs to this family can be easily verified.

 These results can now be used for a better understanding  of
 some integrability conditions found  in the literature.

 \medskip

 $\bullet$  {\it The case of Allen and Stein}:

\medskip

 The results of the paper by Allan and Stein \cite{AS64} can be recovered through our general approach. In that work, a  Riccati equation (\ref{ricceq}) with $b_0b_2>0$ and $b_0$, $b_2$ differentiable functions satisfying the condition
\begin{equation}\label{ALintegra}
\frac{b_1+\frac{1}{2}\left(\frac{\dot b_2}{b_2}-\frac{\dot b_0}{b_0}\right)}{\sqrt{b_0b_2}}=C, 
\end{equation}
where $C$ is a real constant, was transformed into the integrable one
\begin{equation}\label{FREAS64}
\frac{dy'}{dt}=\sqrt{b_0(t)b_2(t)}\left(1+Cy'+y'^2\right),
\end{equation}
through the $t$-dependent  linear transformation
\begin{equation*}
y'=\sqrt{\frac{b_2(t)}{b_0(t)}}y\,.
\end{equation*}

If integrability condition (\ref{ALintegra}) is satisfied by a Riccati equation, such an equation also holds the assumptions of the Corollary \ref{CTU} and, therefore, the integrability condition given in Theorem \ref{TU} with
\begin{equation*}
c_0=1=c_2,\quad c_1=C,\quad D=\sqrt{b_0b_2}.
\end{equation*}
Consequently, the corresponding transformation given by Theorem \ref{TU} reads
\begin{equation*}
y'=\sqrt{\frac{b_2(t)}{b_0(t)}}y\,,
\end{equation*}
showing that the transformation in \cite{AS64} is a particular case of our results. This is not an unexpected result because Theorem \ref{TU} shows that if such a time-dependent change of variables is used to transform a Riccati equation (\ref{ricceq}) into one of the form (\ref{eqDcs}), this change of variables must be of the form (\ref{Chang}) and the initial Riccati equation must hold the integrability conditions (\ref{DinTh2}).

\medskip

$\bullet$ {\it The case of Rao and Ukidave}:

\medskip

Rao and Ukidave stated in their work \cite{RU68} that the Riccati equation (\ref{ricceq}), with $b_0b_2>0$, can
be transformed into
an integrable Riccati equation of the form
\begin{equation*}
\frac{dy'}{dt}=\sqrt{cb_0b_2}\left(1-ky'+\frac{1}{c}{y'}^2\right),
\end{equation*}
through a $t$-dependent  linear transformation
\begin{equation*}
y'=\frac{1}{v(t)}y,
\end{equation*}
if there exist  real constants $c$ and $k$ such that following integrability condition holds
\begin{equation}\label{CondRU1}
 b_2=\frac{b_0}{cv^2},
\end{equation}
with $v$ being a solution of the differential
equation
\begin{equation}\label{CondRU2}
\frac{dv}{dt}=b_1(t)v+kb_0(t)\,.
\end{equation}

Note that, in view of (\ref{CondRU1}), necessarily $c>0$ and if the integrability conditions (\ref{CondRU1}) and (\ref{CondRU2}) hold with constants $c$ and $k$ and a negative solution $v(t)$, the same conditions hold for the constants $c$ and $-k$ and a positive solution $-v(t)$. Consequently, we can restrict ourselves to studying the integrability conditions (\ref{CondRU1}) and (\ref{CondRU2})  for positive solutions $v(t)>0$. In such a case, the previous method uses a $t$-dependent  linear change of coordinates of the form (\ref{yprime}) and the final Riccati equation are of the type described in our work (\ref{eqDcs}), therefore
the integrability conditions derived by Rao and Ukidave must be a particular instance of the integrable cases provided by
 Theorem \ref{TU}. 

Using the value of $v(t)$ in terms of the constant $c$ and the functions $b_0$ and $b_2$ obtained from formula (\ref{CondRU1}) and Eq. (\ref{CondRU2}), we get that
\begin{equation*}
\frac{1}{\sqrt{|b_0b_2|}}\left(b_1+\frac 12\left(\frac{\dot b_2}{b_2}-\frac{\dot b_0}{b_0}\right)\right)=-k\,{\rm sg}(b_0)\sqrt{c}.
\end{equation*}
Hence, the Riccati equations obeying conditions (\ref{CondRU1}) and (\ref{CondRU2}) satisfy the integrability conditions of Corollary \ref{CTU}. Moreover, if we choose
 \begin{equation*}
D^2=cb_0b_2,\quad c_0=1,\quad c_1=-k,\quad c_2=c^{-1}\,,
\end{equation*}
 then $D=\sqrt{c b_0b_2}$ and the only possible transformation (\ref{yprime})  given by Theorem \ref{TU} reads
\begin{equation*}
y'=\alpha^2(t)y=\sqrt{\frac{cb_2(t)}{b_0(t)}}y,
\end{equation*}
and then
\begin{equation*}
\frac{1}{v}=\sqrt{\frac{cb_2}{b_0}}.
\end{equation*}
In this way, we recover one of the results derived by Rao and Ukidave in \cite{RU68}.
\medskip

$\bullet$ {\it The case of Kovalevskaya}:
\medskip

Kovalevskaya showed in the paper \cite{Ko06} that the Riccati equation
\begin{equation*}
\frac{dy}{dt}=F(t)+\left(L+\frac{\dot F(t)}{F(t)}\right)y-\frac{K}{F(t)}y^2,
\end{equation*}
where $K$ and $L$ are real constant, can be integrated through quadratures. It can be verified that the above family of Riccati equations
holds the assumption of Corollary \ref{CTU}. Indeed, taking $c_0=1$, $c_2=-K$, $c_1=L$ we get that $\kappa=1$,
\begin{equation*}
\sqrt{\frac{c_0c_2}{b_0b_2}}\left(b_1+\frac{1}{2}\left(\frac{\dot b_2}{b_2}-\frac{\dot
      b_0}{b_0}\right)\right)=L=c_1,
\end{equation*}
and $D=\sqrt{{b_2b_2}/{c_2c_0}}=1$. Therefore, Theorem \ref{TU} shows that the above family of Riccati equations can be integrated.
Moreover, taking the above values of the constants $c_0$, $c_1$, $c_2$ and the function $b_0(t)=F(t)$, Corollary \ref{C2TU} reproduces the family of Riccati equations analysed by Kovalevskaya.
\medskip

$\bullet$ {\it The case of Hong-Xiang}:

\medskip

As a final example we can consider the Riccati equation 
\begin{equation*}
\frac{dy}{dt}=- y^2-\left(2 b G(t)-\frac{\dot G(t)}{G(t)}\right) y-c G^2(t),
\end{equation*}
used to analyse a certain integrable linear differential equation in \cite{Ro07} which was also analysed by Hong-Xiang \cite{HX82}. The above Riccati equation
satisfies the integrability condition (\ref{resCor2}) and hence it can be integrated. Indeed, we have
$$
\left\{
\begin{aligned}
b_0(t)&=-c G^2(t),\\
b_1(t)&=-\left(2bG(t)-\frac{\dot G(t)}{G(t)}\right),\\
b_2(t)&=-1,
\end{aligned}\right.
$$
and therefore we get that
$$
\frac{b_1+\frac{1}{2}\left(\frac{\dot b_2}{b_2}-\frac{\dot b_0}{b_0}\right)}{\sqrt{|b_0b_2|}}={\rm const.}
$$
In summary, many integrability conditions shown in the literature are equivalent to or
particular instances of those given in our more general statements.

\section{Integrability and reduction}
\indent

In this Section we develop a  procedure that is similar
 to the one derived throughout the previous Sections but we here consider solutions of system
 (\ref{FS}) in two-parameter subsets of $SL(2,\mathbb{R})$. In this case, we recover
 some known integrability conditions, e.g. a certain kind of integrability used in \cite{CarRamGra}. More specifically, we try to relate a Riccati equation
(\ref{ricceq}) to an integrable one associated, as a Lie system, with a curve of the form
 ${\rm a}'(t)=-D(t)(c_0{\rm a}_0+c_1{\rm a}_1+c_2{\rm a}_2)$, with $c_2\neq 0$ and a non-vanishing function $D=D(t)$. Furthermore, we consider solutions of system  (\ref{Sys}) with $\gamma=0$ and $\alpha>0$ related to elements of $SL(2,\mathbb{R})$, i.e. we analyse transformations 
$$y'=\frac{\alpha(t)}{\delta(t)}y+\frac{\beta(t)}{\delta(t)}=\alpha^2(t)\,y+
\alpha(t)\beta(t)\,.$$
In this case, using the expression in coordinates (\ref{FS}) of system (\ref{Sys}), we get that
\begin{equation}\label{PC}
\left(\begin{matrix}
\dot\alpha\\
\dot\beta\\
0\\
\dot\delta
\end{matrix}\right)=\left(
\begin{matrix}
\frac{b'_1-b_1}{2}&b_2 &b'_0&0\\
-b_0& \frac{b'_1+b_1}{2}&0 &b'_0\\
-b'_2&0 &-\frac{b'_1+b_1}{2}& b_2\\
0&-b_2' &-b_0& -\frac{b'_1-b_1}{2}
\end{matrix}\right)\left(
\begin{matrix}
\alpha\\
\beta\\
0\\
\delta
\end{matrix}\right)\,,
\end{equation}
where $b'_j=D\,c_j$ and $c_j\in\mathbb{R}$ for $j=0,1,2$. As we suppose $b'_2\neq 0$, the third equation of the above system implies
\begin{equation*}
\frac{\alpha}{\delta}=\frac{b_2}{b_2'}.
\end{equation*}
As $\alpha\delta=1$ in order to obtain a solution of (\ref{Sys}) related to an element of $SL(2,\mathbb{R})$ and $b_2'=D c_2$,
 we get
\begin{equation}\label{Drelation}
\alpha^2=\frac{b_2}{D c_2}.
\end{equation}
Hence, $\alpha$ is determined, up to a sign, by the values of $b_2(t), D$ and $c_2$. In this way, if we take $\alpha$ to be positive, the first differential equation of system (\ref{PC}) gives us
 the value of $\beta$ in terms of the related initial and final Riccati equation, i.e.
$$
\beta=\frac{1}{b_2}\left(\dot \alpha-\frac{b'_1-b_1}{2}\alpha\right).
$$
Taking into account the relation (\ref{Drelation}), the above expression is equivalent to the differential equation 
\begin{equation*}
\frac{dD}{dt}=\left(b_1(t)+\frac{\dot b_2(t)}{b_2(t)}\right)D-c_1D^2-2b_2(t)D\beta \left(\frac{c_2D}{b_2(t)}\right)^{1/2},
\end{equation*}
and, as $\alpha\delta= 1$, we can define $M=\beta/\alpha$ and rewrite the above expression as follows
\begin{equation*}
\frac{dD}{dt}=\left(b_1(t)+\frac{\dot b_2(t)}{b_2(t)}\right)D-c_1D^2-2b_2(t)MD.
\end{equation*}
Considering the differential equation in $\dot \beta$ in terms of $M$, we get the equation
\begin{equation*}
\frac{dM}{dt}=-b_0(t)+\frac{c_0c_2}{b_2(t)}D^2+b_1(t) M-b_2(t) M^2\,.
\end{equation*}
Finally, as $\delta\alpha=1$ is a first-integral of system (\ref{Sys}), if the system for the variables $M$ and $D$ and all the obtained conditions are satisfied, the value $\delta=\alpha^{-1}$ satisfies its corresponding differential equation of the system (\ref{PC}). To sum up, we have obtained  the following  result.
\begin{theorem}\label{FT2} Given a  Riccati equation (\ref{ricceq}),
there exists a transformation
\begin{equation*}
y'=G(t)y+H(t)\,,\qquad G(t)>0\,,
\end{equation*}
 relating it to the integrable equation
\begin{equation}\label{fequation}
\frac{dy'}{dt}=D(t)(c_0+c_1y'+c_2y'^2),
\end{equation}
with $c_2\neq 0$ and $D$ a non-vanishing function, if and only if there exist functions $D$ and $M$ satisfying the following system
\begin{eqnarray*}
\left\{\begin{aligned}
\frac{dD}{dt}&=\left(b_1(t)+\frac{\dot b_2(t)}{b_2(t)}\right)D-c_1D^2-2b_2(t)MD,\\
\frac{dM}{dt}&=-b_0(t)+\frac{c_0c_2}{b_2(t)}D^2+b_1(t) M-b_2(t) M^2.
\end{aligned}\right.
\end{eqnarray*}
The transformation is then given by
\begin{equation}\label{ChangeT3}
y'=\frac{b_2(t)}{D(t)c_2}(y+M(t))\,.
\end{equation}
\end{theorem}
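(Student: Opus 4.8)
The plan is to read Theorem~\ref{FT2} as a specialisation of Theorem~\ref{THLS}: a transformation $y'=\Phi(\bar A(t),y)$ relating the two Riccati equations exists precisely when the Lie system (\ref{Sys}) admits a solution curve $x(t)=(\alpha(t),\beta(t),\gamma(t),\delta(t))$ with $\det x\equiv 1$ joining the curves ${\rm a}(t)$ and ${\rm a}'(t)=-D(t)(c_0M_0+c_1M_1+c_2M_2)$. The essential new ingredient here is the two-parameter ansatz $\gamma\equiv 0$, $\alpha>0$, under which $\bar A(t)$ becomes upper triangular and unimodular, so that the induced change of variables (\ref{Action}) collapses to the affine map $y'=\alpha^2 y+\alpha\beta$. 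Writing $G=\alpha^2>0$ and $H=\alpha\beta$ identifies the desired transformation $y'=G(t)y+H(t)$, and the whole question becomes: for which $b_0,b_1,b_2$ does system (\ref{FS}) possess a solution of this restricted form with $b'_j=Dc_j$?

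First I would substitute the ansatz into (\ref{FS}) to obtain the reduced system (\ref{PC}). The third row, which now reads $\dot\gamma=0$, together with the unimodularity constraint $\alpha\delta=1$, forces $\alpha/\delta=b_2/b'_2$ and hence $\alpha^2=b_2/(Dc_2)$, i.e. (\ref{Drelation}); this already entails $b_2\neq 0$ and ${\rm sg}(b_2)={\rm sg}(Dc_2)$, which is exactly what makes the choice $\alpha>0$ admissible. The first row of (\ref{PC}) then expresses $\beta$ algebraically through $\beta=b_2^{-1}(\dot\alpha-\tfrac12(b'_1-b_1)\alpha)$. Introducing $M=\beta/\alpha$ (so that $H=GM$) and eliminating $\alpha$ by means of (\ref{Drelation}) converts this first row into the evolution equation for $D$, while the second row of (\ref{PC}), rewritten in terms of $M$, yields the evolution equation for $M$. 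These are precisely the two coupled equations displayed in the statement, and back-substitution gives the transformation formula $y'=\alpha^2(y+M)=\tfrac{b_2}{Dc_2}(y+M)$, which is (\ref{ChangeT3}).

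The step I expect to be the genuine crux is verifying that the reduced $(D,M)$-system is not merely necessary but sufficient, i.e. the careful treatment of the fourth row of (\ref{PC}) governing $\dot\delta$. At first sight, imposing $\gamma\equiv0$ and $\alpha\delta=1$ on a four-component system looks overdetermined, so one must confirm that the $\delta$-equation imposes no extra constraint beyond the $(D,M)$-system. Here I would invoke the first-integral property established for (\ref{Sys}) in Theorem~\ref{THLS}: since $\det x=\alpha\delta-\beta\gamma$ is conserved and reduces to $\alpha\delta$ when $\gamma\equiv0$, prescribing $\delta=\alpha^{-1}$ at the initial time guarantees $\alpha\delta\equiv1$, whereupon the $\dot\delta$ equation is automatically consistent with $\delta=\alpha^{-1}$. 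This closes the argument and powers the ``if'' direction: given any $D,M$ solving the stated system one reconstructs $\alpha=(b_2/(Dc_2))^{1/2}$, $\beta=M\alpha$, $\gamma=0$, $\delta=\alpha^{-1}$, obtains a bona fide curve in $SL(2,\mathbb{R})$ by Theorem~\ref{THLS}, and hence the transformation (\ref{ChangeT3}). The ``only if'' direction is the same chain of equalities read in reverse, and the remaining checks (signs, and non-vanishing of $D$ and $b_2$ throughout the interval) are routine bookkeeping.
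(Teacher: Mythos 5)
Your proposal is correct and follows essentially the same route as the paper: the ansatz $\gamma\equiv 0$, $\alpha>0$ in system (\ref{FS}), the third row plus $\alpha\delta=1$ giving $\alpha^2=b_2/(Dc_2)$, the first row determining $\beta$, the substitution $M=\beta/\alpha$ producing the coupled $(D,M)$ system, and the determinant first integral of (\ref{Sys}) to show the $\dot\delta$ equation is automatically satisfied by $\delta=\alpha^{-1}$. Your explicit reconstruction $\alpha=(b_2/(Dc_2))^{1/2}$, $\beta=M\alpha$, $\gamma=0$, $\delta=\alpha^{-1}$ for the sufficiency direction is exactly the content the paper compresses into its closing remark before stating the theorem.
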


Consider $c_0=0$ in Eq. (\ref{fequation}). Thus, the system determining the curve
in $SL(2,\mathbb{R})$ performing the transformation of Theorem \ref{FT2} is
\begin{equation}
\left\{
\begin{array}{rcl}
\dfrac{dD}{dt}&=&\left(b_1(t)+\dfrac{\dot b_2(t)}{b_2(t)}\right)D-c_1D^2(t)-2b_2(t)MD,\\
\dfrac{dM}{dt}&=&-b_0(t)+b_1(t) M-b_2(t) M^2.
\end{array}\label{RedSep}\right.
\end{equation}
On one hand, this system does not involve any integrability condition because, as a consequence of the Theorem of existence and uniqueness of solutions,
there always exists a solution for every initial condition. On the other hand, such solutions can  be as difficult to be found as the general solution of the initial
Riccati equation. Hence, in order to find a particular solution, we need to look for some
simplifications. For instance, we can consider the case in which $M=b_1/b_2$. In this case, the first differential equation of the above system does not depend on $M$ and reads
$$
\frac{dD}{dt}=\left(-b_1(t)+\frac{\dot b_2(t)}{b_2(t)}\right)D-c_1D^2
$$
and it is integrable by quadratures. Its solution reads
\begin{equation*}
D(t)=\frac{\exp\left(\int_0^t A(t')dt'\right)}{C+c_1\int^t_0\exp\left(\int_0^{t''} A(t')dt'\right)dt''}\,,\qquad A(t)=\left(-b_1(t)+\frac{\dot b_2(t)}{b_2(t)}\right).
\end{equation*}

Meanwhile, the condition for  $M=b_1/b_2$ to be a solution of the second equation in
(\ref{RedSep}) is
\begin{equation*}
\frac{d}{dt}\left(\frac{b_1}{b_2}\right)=-b_0\,,
\end{equation*}
giving rise to an integrability condition. This summarises one of the integrability conditions considered in \cite{Ra62}. 

Next, we recover from this new viewpoint the well-known result that the knowledge of a particular solution of the Riccati equation allows us to solve the system (\ref{RedSep}). In fact, under the change of variables $M\mapsto -y$, system (\ref{RedSep}) becomes
\begin{eqnarray}\label{eq8}
\left\{\begin{aligned}
\frac{dD}{dt}&=\left(b_1(t)+\frac{\dot b_2(t)}{b_2(t)}\right)D-c_1D^2+2b_2(t)yD,\\
\dfrac{dy}{dt}&=b_0(t)+b_1(t) y+b_2(t) y^2.
\end{aligned}\right.
\end{eqnarray}
Note that each particular solution of the above system is the form $(D_p(t),y_p(t))$, with $y_p(t)$ a particular solution of the Riccati equation (\ref{ricceq}). Therefore, given such a particular solution $y_p(t)$, the function $D_p=D_p(t)$, corresponding to the particular solution $(D_p(t),y_p(t))$ of system (\ref{eq8}), holds the integrable equation 
\begin{equation}\label{PS}
\frac{dD_p}{dt}=\left(b_1(t)+\frac{\dot b_2(t)}{b_2(t)}+2b_2(t)y_p(t)\right)D_p-c_1D_p^2.
\end{equation}
Hence, the knowledge of a particular solution $y_p(t)$ of the Riccati equation (\ref{ricceq}) enables us to get a particular solution $(D_p(t),y_p(t))$ of system (\ref{eq8}) and, taking into account the change of variables $y\mapsto -M$, a particular solution $(D_p(t),M_p(t))=(D_p(t),-y_p(t))$ of system (\ref{RedSep}). Finally, the  functions $M_p(t)$ and $D(t)$ determines a change of variables (\ref{ChangeT3}) given by Theorem \ref{FT2} transforming the initial Riccati equation (\ref{ricceq}) into another one related, as a Lie system, to a solvable Lie algebra of vector fields. In this way, we describe a reduction process similar to that one pointed out in \cite{CarRamGra}. Nevertheless, we here directly obtain a reduction to a Riccati equation related, as a Lie system, to a one dimensional Lie subalgebra of $\mathfrak{sl}(2,\mathbb{R})$ through one of its particular
solutions.

There exists many ways to impose conditions on the coefficients of the second
equation of (\ref{eq8}) for being able  to obtain one of its particular
solutions easily. Now, we give some particular examples of this.

If there exists a real constant $c$ such that for the time-dependent functions $b_0$, $b_1$ and $b_2$ we have that $b_0+b_1 c+b_2 c^2=0$, then $c$ is a particular solution. This resumes some cases found in
\cite{CarRamGra, Stre}. For instance:

\begin{enumerate}
 \item $b_0+b_1+b_2=0$ means that $c=1$ is a particular solution.
\item $c_1^2b_0+c_1c_2b_1+c_2^2b_2=0$ means that $c=c_2/c_1$ is a particular solution.
\end{enumerate}

In these particular instances, we can find $D$ through the first differential
equation of (\ref{eq8}).

As a first application of this last case  we can integrate the Riccati equation
\begin{equation}\label{Hovy}
\frac{dy}{dt}=-\frac{n}{t}+\left(1+\frac{n}{t}\right)y-y^2.
\end{equation}
related to Hovy's equation \cite{Ro07}. This Riccati equation admits the particular constant solution $y_p(t)=1$. Using such a particular solution in Eq. (\ref{PS}) and fixing, for instance, $c_1=0$, we can obtain a particular solution for Eq. (\ref{PS}), e.g. $D_p(t)=t^ne^{-t}$. Therefore, we get that $(t^ne^{-t},1)$ is a solution of the system (\ref{eq8}) related to Eq. (\ref{Hovy}) and $(t^ne^{-t},-1)$ is a solution of the system (\ref{RedSep}). In this way, Theorem \ref{FT2} states that the transformation (\ref{ChangeT3}), determined by the $D_p(t)=t^ne^{-t}$ and $M_p(t)=-1$, of the form
\begin{equation}\label{rel}
y'=-t^{-n}e^tc_2^{-1}(y-1),
\end{equation}
relates the solutions of Eq. (\ref{Hovy}) to the integrable one 
$$
\frac{dy'}{dt}=e^{-t}t^n(c_0+c_2y'^2).
$$
If we fix $c_0=1$ and $c_2=1$ the solution for the above equation is
\begin{equation*}
y'(t)=-\frac{1}{-K+\Gamma(1+n,t)},
\end{equation*}
where $K$ is an integration constant and $\Gamma(a,b)$ is the incomplete Euler's Gamma function
\begin{equation*}
\Gamma(a,t)=\int^\infty_t t'^{a-1}e^{-t'}dt'.
\end{equation*}
In view of the change of variables (\ref{rel}), the solutions $y(t)$ of the
Riccati equation (\ref{Hovy}) 
and $y'(t)$ are related by means of the expression
$y'(t)=-t^{-n}e^tc_2^{-1}(y(t)-1)$. Therefore, if we
 substitute the general solution $y'(t)$ in this expression we can derive the
 general solution 
for the Riccati equation (\ref{Hovy}), that is, 
\begin{equation*}
y(t)=1-\frac{e^{-t}t^n}{\Gamma(n+1,t)+K}.
\end{equation*}

Another approach that can be summarised by Theorem \ref{FT2} is the factorisation
method developed in \cite{Ro07} to explain an integrability process for second-order
differential equations. In that work, it was analysed the differential equation:
\begin{equation}\label{SOE}
\frac{d^2y}{dt^2}+2P(t)\frac{dy}{dt}+\left(\frac{dP}{dt}+P^2(t)-\frac{d\phi}
{dt}-\phi^2(t)\right)y=0\,.
\end{equation}

We know that invariance under dilations leads to consider an adapted
variable $z$, such that $y=e^z$. Under this change of variables the equation obtained for $\psi=\dot z$ is the
Riccati equation
\begin{equation}\label{FRE}
\frac{d\psi}{dt}=-\psi^2-2P(t)\psi-\left(\frac{dP}{dt}(t)+P^2(t)-\frac{d\phi}
{dt}(t)-\phi^2(t)\right).
\end{equation}
This equation was integrated through a factorisation method in
\cite{Ro07}. Nevertheless, we can also integrate
 this equation if we take into account that $\psi_p(t)=\phi(t)-P(t)$ is
a particular solution of the above
differential equation and then applying the same procedure as for Eq.
(\ref{Hovy}). Indeed, 
as $\psi_p(t)$ is a particular  solution for the Riccati equation (\ref{FRE}),
we can obtain a 
particular solution $D_p=D_p(t)$ for Eq. (\ref{PS})  and by means of
the functions 
$M_p(t)=-\psi_p(t)$ and $D_p(t)$ we can obtain the solution of the Riccati
equation (\ref{FRE}). 
Finally, inverting the change of variables used to relate Eq.
(\ref{SOE}) to (\ref{FRE}) 
we obtain the solution for Eq. (\ref{SOE}).

\section{Linearisation of Riccati equations}
\indent 

One can also study the problem of the linearisation of Riccati equations
through the linear fractional
 transformations (\ref{yprime}). This set
 of time-dependent transformations is general enough to include many of the
time-dependent or time-independent changes of variables already used to study Riccati equations, e.g. it
 allows us to recover the results of
\cite{RDM05}. As a main result, we state in this Section some integrability
conditions to be able to 
transform a $t$-dependent
Riccati equation into a linear one by means of a diffeomorphism on
$\overline{\mathbb{R}}$ associated with 
certain linear fractional transformations.

As a first insight in the linearisation process, note that Corollary \ref{CorCur} shows that
 there always exists a curve in $SL(2,\mathbb{R})$, and then a
 $t$-dependent
 linear fractional
 transformation on ${\overline{\mathbb{R}}}$, transforming a given Riccati
 equation into any other one. 
In particular, if we fix $b_2'=0$ in the final Riccati equation, we obtain that there is
 a $t$-dependent linear fractional change of variables  transforming any
 Riccati equation (\ref{ricceq}) 
into a linear one.
Nevertheless, as the Lie system (\ref{Sys}) describing such a transformation is
not related to a 
solvable Lie algebra of vector fields, it is not easy to find such a transformation in the general case.

Let us try to relate a Riccati equation (\ref{ricceq}) to a linear differential
equation by means of 
a linear fractional transformation (\ref{Action}) determined by a vector
$(\alpha,\beta,\gamma,\delta)\in \mathbb{R}^4$ 
with $\alpha\delta-\beta\gamma=1$. In this case, the existence of solutions of
the system (\ref{FS}) performing 
such a transformation is an easy task and we can look for integrability
conditions to get the corresponding 
change of variables. Note that as $(\alpha,\beta,\gamma,\delta)$ is a constant,
we have 
$\dot\alpha=\dot\beta=\dot\gamma=\dot\delta=0$ and, in view of (\ref{FS}), the
diffeomorphism on 
$\overline{\mathbb{R}}$ performing the transformation is related to a vector in the kernel of the matrix
\begin{equation}\label{EM}
B=\left(\begin{matrix}
\frac{b'_1-b_1}{2}&b_2 &b'_0&0\\
-b_0& \frac{b'_1+b_1}{2}&0 &b'_0\\
0&0 &-\frac{b'_1+b_1}{2}& b_2\\
0&0 &-b_0& -\frac{b'_1-b_1}{2}
\end{matrix}\right),
\end{equation}
where we assume $b_0b_2\neq 0$ in an open interval in the variable $t$. We
leave out the study of the case 
$b_0b_2=0$  in an open interval because, as it was shown in Sec.
\ref{IntRicEqu}, this case is 
known to be integrable.

The necessary and sufficient condition for a non-trivial $\ker B$ is $\det B=
0$ and, therefore, a 
short calculation shows that $\dim\,\, {\rm ker}\, B>0$ if and only if
$(-b_1^2+b_1'^2(t)+4 b_0 b_2)^2=0.$ Thus, $b_1'=\pm \sqrt{b_1^2-4 b_0 b_2}$ and
$b_1'$ is fixed, but a sign,
 by the values of $b_0$, $b_1$ and $b_2$. Let us study the kernel of the matrix
 $B$ in the positive 
and negative cases for $b'_1$.

$\bullet$ Positive case:
The kernel of the matrix (\ref{EM}) is given by the vectors

\begin{equation*}
\left(\delta\frac{b_0'}{b_0}+\beta\frac{b_1+\sqrt{b_1^2-4 b_0 b_2}}{2 b_0}
,\beta,-\delta\frac{-b_1+\sqrt{b_1^2-4 b_0b_2}}{2 b_0},\delta\right), \qquad \delta,\beta\in\mathbb{R}.
\end{equation*}
Recall that we only consider the constant elements of $\ker B$, therefore there
should be two real 
constants $K_1$ and $K_2$ such that
\begin{eqnarray*}
K_1=\delta\frac{b_0'}{b_0}+\beta\frac{b_1+\sqrt{b_1^2-4 b_0 b_2}}{2 b_0},\qquad K_2=\frac{-b_1+\sqrt{b_1^2-4 b_0b_2}}{2 b_0}.
\end{eqnarray*}
Moreover, in order to relate these vectors to elements in $SL(2,\mathbb{R})$ we
have to 
impose that $\det (K_1,\beta,-\delta K_2,\delta)=\delta(K_1 +\beta K_2)=1$. 

The second condition imposes a restriction on the coefficients of the
initial Riccati equation to be linearisable  by a constant linear fractional
transformation (\ref{Action}). 
Then, if this condition is satisfied we can fix $\beta,\gamma, K_1$ and $b_0'$
to satisfy the other conditions. 
Thus, the only linearisation  condition  is the condition on $K_2$.

$\bullet$ Negative case: In this case, $\ker\,B$ reads
\begin{equation*}
\left(\delta \frac{b_0'}{b_0}+\beta\frac{b_1-\sqrt{b_1^2-4 b_0 b_2}}{2 b_0}
,\beta,-\delta\frac{-b_1-\sqrt{b_1^2-4 b_0b_2}}{2 b_0},\delta\right),\qquad \delta,\beta\in\mathbb{R},
\end{equation*}
and now the new conditions reduce to the existence of two real constants $K_1$ and $K_2$ such that 
\begin{eqnarray*}
K_1=\delta \frac{b_0'}{b_0}+\beta\frac{b_1-\sqrt{b_1^2-4 b_0 b_2}}{2
  b_0},\qquad K_2=
\frac{-b_1-\sqrt{b_1^2-4 b_0b_2}}{2 b_0},
\end{eqnarray*}
with $\delta(K_1+\beta  K_2)=1$. If the condition in $K_2$ is satisfied we can
proceed as in 
the positive case to obtain the transformation performing the linearisation of
the 
initial Riccati equation.

In summary:
\begin{theorem}
The necessary and sufficient condition for the existence of a diffeomorphism on
$\bar{\mathbb{R}}$ of linear 
fractional type associated with a transformation
of $SL(2,\mathbb{R})$ transforming the Riccati equation (\ref{ricceq}) into a
linear differential equation 
is the existence of a real constant $K$ such that
\begin{equation}\label{IntCond}
K=\frac{-b_1\pm\sqrt{b_1^2-4 b_0b_2}}{2 b_0}.
\end{equation}
\end{theorem}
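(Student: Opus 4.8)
The plan is to reduce the existence of the sought linearising diffeomorphism to an algebraic question about the kernel of the constant matrix (\ref{EM}), and then to disentangle which of the ensuing conditions genuinely constrain $(b_0,b_1,b_2)$ and which may be arranged at will. A linear fractional transformation (\ref{Action}) coming from a \emph{constant} matrix with entries $(\alpha,\beta,\gamma,\delta)$, $\alpha\delta-\beta\gamma=1$, turns the Riccati equation (\ref{ricceq}) into a linear one exactly when the transformed coefficient $b'_2$ vanishes. Since $\dot\alpha=\dot\beta=\dot\gamma=\dot\delta=0$, imposing $b'_2=0$ in system (\ref{FS}) forces $(\alpha,\beta,\gamma,\delta)$ to lie in $\ker B$, with $B$ the matrix (\ref{EM}). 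Hence the whole problem becomes: when does $\ker B$ contain a constant vector that is normalisable to an element of $SL(2,\mathbb{R})$? Note that Corollary \ref{CorCur} already yields \emph{some} (time-dependent) transformation to a linear equation, so the real content here is the obstruction to achieving this with a \emph{constant} transformation.

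First I would determine when $\ker B$ is nontrivial. The matrix (\ref{EM}) is block upper triangular, its lower-left $2\times2$ block vanishing, so $\det B$ equals the product of the determinants of its two diagonal $2\times2$ blocks $P$ (indices $1,2$) and $R$ (indices $3,4$). A direct computation gives $\det P=\det R=\frac{1}{4}(b_1'^2-b_1^2)+b_0b_2$, whence
\begin{equation*}
\det B=\frac{1}{16}\left(b_1'^2-b_1^2+4b_0b_2\right)^2 .
\end{equation*}
Thus $\dim\ker B>0$ if and only if $b_1'=\pm\sqrt{b_1^2-4b_0b_2}$, which fixes $b_1'$ up to a sign (and requires $b_1^2\ge 4b_0b_2$ for a real transformation). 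For each sign I would then solve $Bv=0$ explicitly. The last two rows, proportional because $\det R=0$, involve only $\gamma,\delta$ and, using $(b_1'+b_1)(b_1'-b_1)=-4b_0b_2$, give $\gamma=-\delta K_2$ with $K_2=(-b_1\pm\sqrt{b_1^2-4b_0b_2})/(2b_0)$; the second row then determines $\alpha$ in terms of $\beta,\delta$ and the target coefficient $b'_0$, while the first row is automatically satisfied, its $\beta$- and $\delta$-coefficients cancelling by the same identity. This reproduces the two-parameter kernel family.

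The decisive step is to require that a vector of this family be \emph{constant} in $t$ and normalisable in $SL(2,\mathbb{R})$, and to show that this collapses to the single condition (\ref{IntCond}). The ratio $\gamma/\delta=-K_2$ depends only on $b_0,b_1,b_2$, so constancy of $\gamma$ and $\delta$ forces $K_2=(-b_1\pm\sqrt{b_1^2-4b_0b_2})/(2b_0)$ to be a real constant $K$, which is exactly (\ref{IntCond}); equivalently, $K$ must be a fixed real root of $b_0K^2+b_1K+b_2=0$ for all $t$. Conversely, assuming $K_2\equiv K$, I would construct an admissible transformation. The component $\alpha$ carries the factor $b'_0/b_0$, and because $b'_0$ is the unconstrained (and possibly $t$-dependent) independent term of the target \emph{linear} equation, it can be chosen so that $\alpha$ equals any prescribed constant $K_1$; the remaining freedom in the constants $\beta,\delta,K_1$ then meets the normalisation $\alpha\delta-\beta\gamma=\delta(K_1+\beta K_2)=1$ (for instance $\beta=0$, $\delta=K_1=1$, $\gamma=-K$, which forces $b'_0=b_0$). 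This delivers a constant element of $SL(2,\mathbb{R})$ carrying (\ref{ricceq}) into a linear equation, establishing sufficiency.

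The main obstacle is bookkeeping rather than calculation: one must cleanly separate the genuine constraint on the data $(b_0,b_1,b_2)$ from the quantities left free. It is tempting to read off several independent ``constancy'' demands from the components of the kernel vector, but the component containing $b'_0$ imposes nothing (since $b'_0$ is a free target coefficient) and the $SL(2,\mathbb{R})$ normalisation is always solvable; only the purely input-dependent ratio $K_2$ is forced to be constant. Verifying this split correctly — together with checking that the redundant row of $B$ adds no new condition — is where care is needed, whereas the determinant factorisation and the explicit kernel are routine once the block structure of (\ref{EM}) is noticed.
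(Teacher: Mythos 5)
Your proposal is correct and takes essentially the same route as the paper: you impose $b'_2=0$ and constancy of $(\alpha,\beta,\gamma,\delta)$ in system (\ref{FS}), reduce the problem to the kernel of the matrix (\ref{EM}), obtain $\det B=\frac{1}{16}\left(b_1'^2-b_1^2+4b_0b_2\right)^2$ and the two-parameter kernel family, and correctly isolate the constancy of $K_2=(-b_1\pm\sqrt{b_1^2-4b_0b_2})/(2b_0)$ as the only genuine constraint, with $b'_0$, $\beta$, $\delta$, $K_1$ absorbing the remaining requirements. If anything, your write-up is more explicit than the paper's at two points it leaves implicit — the block-triangular determinant computation and the concrete sufficiency witness $(\alpha,\beta,\gamma,\delta)=(1,0,-K,1)$ with $b'_0=b_0$, which indeed kills $b'_2=b_2+Kb_1+K^2b_0=0$ since $K$ is a root of $b_0K^2+b_1K+b_2$.
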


As a Riccati equation (\ref{ricceq}) holds condition (\ref{IntCond}) if and
only if $K$ is a constant 
particular solution, we get the following corollary:

\begin{corollary}
A Riccati equation can be linearised by means of a diffeomorphism on
$\overline{\mathbb{R}}$ of the form 
(\ref{Action}) if and only if it admits a constant particular solution.
\end{corollary}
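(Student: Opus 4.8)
The plan is to deduce the Corollary from the theorem just established, leaving only a purely algebraic equivalence to verify. That theorem already states that, under the standing hypothesis $b_0b_2\neq 0$, the Riccati equation (\ref{ricceq}) can be linearised by a diffeomorphism of $\overline{\mathbb{R}}$ of the form (\ref{Action}) if and only if there is a real constant $K$ satisfying condition (\ref{IntCond}). Thus it suffices to prove that (\ref{IntCond}) holds for some constant $K$ exactly when (\ref{ricceq}) admits a constant particular solution, and then chain this with the theorem.

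The core step is to characterise constant solutions directly. A constant $y\equiv c$ solves (\ref{ricceq}) if and only if $b_2(t)c^2+b_1(t)c+b_0(t)=0$ holds identically in $t$, whereas squaring the defining relation (\ref{IntCond}) shows that such a $K$ obeys $b_0K^2+b_1K+b_2=0$. The key observation is that these two quadratics have reciprocal roots: since $b_0b_2\neq 0$, neither admits $0$ as a root, and the substitution $K=1/c$ carries $b_2c^2+b_1c+b_0=0$ into $b_0K^2+b_1K+b_2=0$. Consequently a (necessarily nonzero) constant solution $c$ exists if and only if the constant $K=1/c$ satisfies (\ref{IntCond}). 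Combined with the theorem, this yields the Corollary.

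I would make this correspondence transparent through the geometry behind (\ref{Action}). The linear fractional transformation assembled from the kernel vector of the matrix (\ref{EM}) sends the constant curve $y\equiv 1/K$ to the point at infinity, since its singular locus is $y=-\delta/\gamma$ and the kernel computation gives $\gamma=-\delta K$, whence $-\delta/\gamma=1/K$. On the target linear equation the point $\infty$ is an equilibrium of the associated complete vector field on $\overline{\mathbb{R}}$, hence a constant solution; as a time-independent transformation maps solutions to solutions bijectively, its preimage $y\equiv 1/K$ must be a constant solution of (\ref{ricceq}). This explains why the constant solution is $1/K$ rather than $K$ and recovers the equivalence independently of the algebra.

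The point needing most care is precisely this reciprocal matching of the two quadratics, together with the bookkeeping of the $\pm$ branch in (\ref{IntCond}) and of the discriminant: I must use $b_0b_2\neq 0$ to ensure $0$ is not a root of either quadratic, so that inverting $c\mapsto 1/c$ is legitimate and a real constant $K$ corresponds to a genuine real constant solution. The degenerate cases $b_0\equiv 0$ or $b_2\equiv 0$, excluded by the hypothesis, are in any event already known to be integrable and linearisable, as recalled in Section \ref{IntRicEqu}.
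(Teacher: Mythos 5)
Your proposal is correct and follows the same overall route as the paper: deduce the corollary from the preceding theorem by showing that condition (\ref{IntCond}) with a real constant $K$ is equivalent to the existence of a constant particular solution of (\ref{ricceq}). The difference lies in how that equivalence is justified, and there your version is more careful than the paper's. The paper disposes of the corollary with the single remark that (\ref{IntCond}) holds ``if and only if $K$ is a constant particular solution''; but, as your reciprocal-roots computation shows, the $K$ of (\ref{IntCond}) satisfies $b_0K^2+b_1K+b_2=0$, whereas a constant solution $c$ of (\ref{ricceq}) satisfies $b_2c^2+b_1c+b_0=0$, so the constant solution is $1/K$ rather than $K$ itself. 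Since the standing hypothesis $b_0b_2\neq 0$ excludes zero as a root of either quadratic, the map $c\mapsto K=1/c$ is a bijection between the real constant roots of the two quadratics, so the equivalence --- and hence the corollary --- survives intact; your proof supplies the step the paper's one-line justification gets slightly wrong. Your geometric addendum, that the kernel vector of (\ref{EM}) yields $\gamma=-\delta K$ so the fractional transformation sends $y\equiv 1/K$ to the equilibrium $\infty$ of the target linear equation on $\overline{\mathbb{R}}$, has no counterpart in the paper but is confirmed by the paper's own worked example, where $K=-k$ and the transformation $y'=(-1/k)/(ky+1)$ sends the constant solution $y\equiv -1/k=1/K$ to $\infty$. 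The one residual loose end, which you flag but neither you nor the paper fully settles, is the branch bookkeeping in (\ref{IntCond}) at points where $b_1^2-4b_0b_2$ vanishes; it is harmless here, since the existence of a constant real root forces the discriminant to be nonnegative and determines the branch pointwise.
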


Ibragimov showed that a Riccati equation (\ref{ricceq}) is linearisable by
means of a change 
of variables $z=z(y)$ if and only if the Riccati equation admits a constant
solution \cite{Ib08}. 
Additionally, we have proved that in such a case, the change of variables
can be described by 
means of a transformation of the type (\ref{Action}). 

Now, it can be checked that the example given in \cite{Ib08,RDM05} satisfies
the above integrability 
condition. In this work, the differential equation
\begin{equation*}
\frac{dy}{dt}=P(t)+Q(t)y+k(Q(t)-kP(t))y^2,
\end{equation*}
was studied. The only interesting case is that with $k\neq 0$ because the other
ones are linear. 
In this latter case, $b_0(t)=P(t)$, $b_1(t)=Q(t)$ and $b_2(t)=k(Q(t)-kP(t))$. Hence,
\begin{eqnarray*}
\begin{aligned}
\frac{-b_1-\sqrt{b_1^2-4 b_0b_2}}{2 b_0}=-k,
\end{aligned}
\end{eqnarray*}
and the integrability condition (\ref{IntCond}) holds. Now we may fix $K_1=0$
and we look for a 
solution for the condition $\det(K_1,\beta,-\delta K_2,\delta)=1$ reading
$k\delta\beta=-1$. 
As $k\neq 0$, we can take $\beta=-1/k$ to get from the above condition that
$\delta=1$. 
Thus the transformation is that one associated with the vector $(0,-1/k,k,1)$,
i.e. 
the linear fractional transformation
\begin{equation*}
y'=\frac{-1/k}{ky+1}
\end{equation*}
that is the same found in \cite{RDM05}.
In this way we only have to obtain $b_0'$ from the condition
$$
K_1=0=\delta\frac{b_0'}{b_0}+\beta\frac{b_1+\sqrt{b_1^2-4 b_0 b_2}}{2 b_0},
$$ to get the final linear differential equation, that is,
\begin{equation*}
\frac{dy'}{dt}=\frac{Q(t)}{P(t)}+(Q(t)-2P(t)k)y',
\end{equation*}
as it appears in \cite{CRL07}.
\section{Conclusions and outlook}
\indent 

It has been shown that previous works about the integrability of the Riccati
equation can be explained from the unifying viewpoint of Lie systems. The
transformations 
used in the study of the integrability condition for these equations have been
understood as 
induced by curves in $SL(2,\mathbb{R})$.

We have investigated a Lie system characterising the  $t$-dependent fractional
transformations 
relating different Riccati equations associated, as Lie systems,
with curves in $\mathfrak{sl}(2,\mathbb{R})$. We have used this differential equation
 and considered some simple instances. These simplifications have been
 used to analyse known integrability conditions and provide new ones.

We have also shown that the system (\ref{FS}) is a good way to describe
linear fractional time--dependent transformations and found necessary and sufficient
conditions for the linearisability or simplification of a Riccati equation
through time-independent 
and time-dependent transformations obtained from curves in $SL(2,\mathbb{R})$.

There  are many ways of simplifying (\ref{Sys}) and some of them  have been
developed here. Other ways can be used to obtain new
integrability conditions.

Finally, the theory used here can be extended to any other Lie system to
provide new or 
recover known integrability conditions. This fact is to be developed in forthcoming works.
\section*{Acknowledgements}
\indent

 Partial financial support by research projects MTM2009-11154, MTM2009-08166-E
 and E24/1 (DGA)
 are acknowledged. JdL also acknowledges
 a F.P.U. grant from  Ministerio de Educaci\'on y Ciencia.

\end{document}